\newtheorem{theorem}{Theorem}
\newtheorem{proposition}{Proposition}
\newtheorem{remark}{Remark}
\newtheorem{corollary}{Corollary}
\newcommand{\R}{\mathbb{R}}
\newcommand{\g}{\gamma}
\newcommand{\pd}{\mathbb{\partial}}
\newcommand{\Om}{\Omega}
\newcommand{\eps}{\varepsilon}
\def\d{\, \text{d}}
\newcommand{\ls}[1]{\textcolor{cyan}{#1}}
\let\tau\uptau
\title{How environment affects active particle swarms: a case study}
\author{Pierre Degond$^{(1)}$, Angelika Manhart$^{(2)}$, Sara Merino-Aceituno$^{(3)}$, \\Diane Peurichard$^{(4)}$, Lorenzo Sala$^{(5)}$}  
\date{\today}
\begin{document}

\maketitle

\begin{center}
1- Institut de Mathématiques de Toulouse ; UMR5219 \\
Université de Toulouse ; CNRS \\
UPS, F-31062 Toulouse Cedex 9, France\\
email: pierre.degond@math.univ-toulouse.fr
\end{center}

\begin{center}
2- Mathematics Department, 
University College London,\\
25 Gordon Street, London, UK\\
email: a.manhart@ucl.ac.uk
\end{center}

\begin{center}
3- Faculty of Mathematics,
University of Vienna,\\
Oskar-Morgenstern-Platz 1,
1090 Vienna, Austria\\
email: sara.merino@univie.ac.at
\end{center}

\begin{center}
4- Inria, Laboratoire Jacques-Louis Lions, \\
Sorbonne Université, CNRS, Université de Paris\\
4, Place Jussieu,
F75252 Paris Cedex 05\\
email: diane.a.peurichard@inria.fr
\end{center}

\begin{center}
5- INRIA Saclay Ile-de-France\\
1 rue Honoré d’Estienne d’Orves\\
91120 Palaiseau,
FRANCE\\
email: lorenzo.sala@inria.fr
\end{center}

\begin{abstract}
We investigate the collective motion of self-propelled agents in an environment filled with obstacles that are tethered to fixed positions via springs. The active particles are able to modify the environment by moving the obstacles through repulsion forces. This creates feedback interactions between the particles and the obstacles from which a breadth of patterns emerges (trails, band, clusters, honey-comb structures,...). We will focus on a discrete model first introduced in \cite{Aceves2020} and derived into a continuum PDE model. 
As a first major novelty, we perform an in-depth investigation of pattern formation of the discrete and continuum models in 2D: we provide phase-diagrams and determine the key mechanisms for bifurcations to happen using linear stability analysis. As a result, we discover that the agent-agent repulsion, the agent-obstacle repulsion and the obstacle's spring stiffness are the key forces in the appearance of patterns, while alignment forces between the particles play a secondary role. The second major novelty lies in the development of an innovative methodology to compare discrete and continuum models that we apply here to perform an in-depth analysis of the agreement between the discrete and continuum models.
\end{abstract}

\clearpage
\tableofcontents
\clearpage

\section{Introduction}
Understanding how patterns in collective motion arise from local interactions between individuals is an exciting and challenging endeavour that has drawn the attention of the scientific community \cite{Ben-Jacob2000,Bernoff2016,Cavagna2010,Boissard2013,degonddimarco2015,helbing2005self,noselli2019swimming,Schoeller2018}. 
In many scenarios the environment plays a key role in the emergence of collective motion and of the resulting patterns \cite{chepizhko2013optimal,kamal2018enhanced,Lo2000,majmudar2012experiments,park2008enhanced,tung2017fluid}. 
Examples are evacuation dynamics in the presence of obstacles~\cite{Feliciani2016,helbing2005self,maury2010macroscopic}, sperm dynamics in the seminal fluid~\cite{Degond2019,tung2017fluid}, swirl of fish under the presence of predators~\cite{carrillo2010particle}, cells moving in a space filled with fibers~\cite{Lo2000} or over a substrate~\cite{noselli2019swimming},... 
\newline
In particular, we are interested by feedback interactions between self-propelled agents and their environment that they are able to modify. This happens, for example, {(i)} in the formation of paths in grass-land by active walkers~\cite{Helbing1997,Lam1995},(ii) in the modification of the extra-cellular matrix (fibers) by migratory cells~\cite{Baricos1995}, or {(iii)} in ant trail formation due to ant pheromone deposition \cite{Boissard2013}.  In this paper, we will focus on the model introduced in~\cite{Aceves2020} where collective motion happens in an environment filled with movable obstacles that are tethered to a fixed point via a spring. 
The authors in~\cite{Aceves2020} showed that a variety of patterns are generated due to the feedback interactions between the obstacles and the self-propelled agents. Indeed, the capacity of the agents to modify their environment (\textit{i.e.}, to modify the position of the obstacles) is key for patterns to form. 

\begin{figure}
    \centering
    \includegraphics[width=\textwidth]{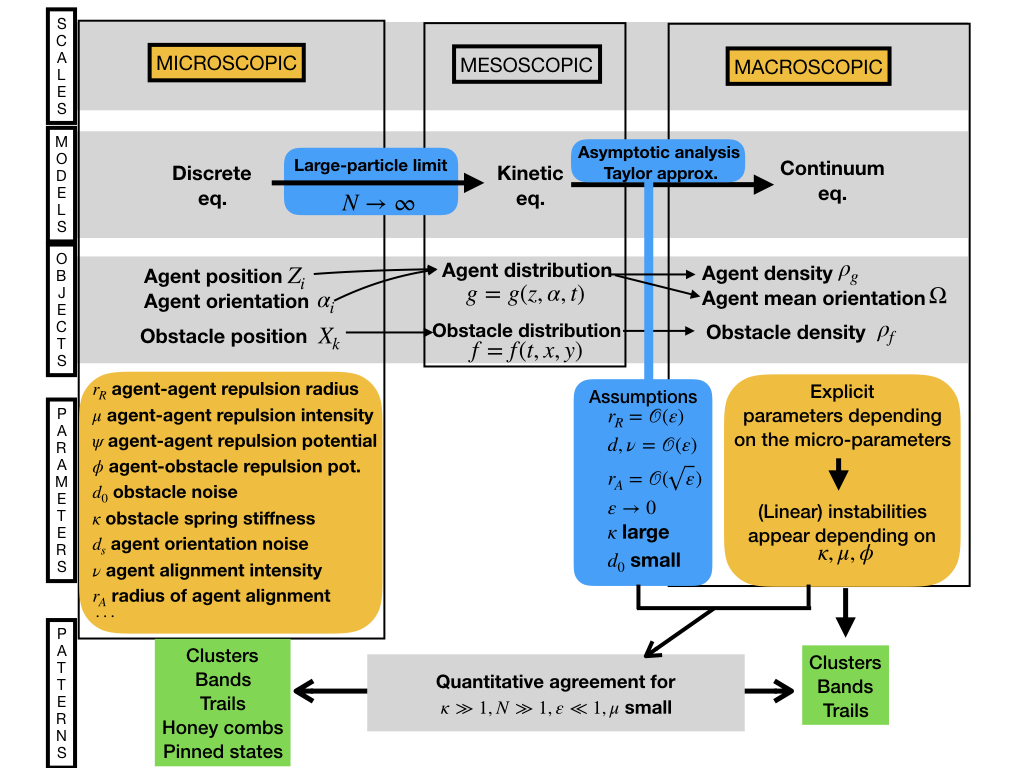}
    \caption{Overview of the paper. It includes a summary of the scales, the models and the objects considered in this paper and introduced in \cite{Aceves2020} (first three grey lines). The blue boxes indicate the derivation of the different models and derivation assumptions. The main contributions in the paper appear in the last row corresponding to `patterns' (at the discrete and continuum level and their correspondence)  and the linear stability analysis (bottom right yellow box).}
    \label{Fig:project_diagram}
\end{figure}

Figure~\ref{Fig:project_diagram} offers an overview of the ideas and messages of this paper. We will consider mostly two scales (marked in yellow). The reason for this is that understanding the emergent properties of collective dynamics  requires to establish a link between the agent's interactions and the continuum dynamics that emerge{s} at scales much larger than the size of the individual agents. 
As a consequence, it is natural to consider two different scales to investigate collective motion: a microscopic scale where the discrete dynamics of the agents can be described, and a macroscopic scale where the average/continuum behaviour of the large ensemble can be observed.  

From a modelling perspective, it is natural to consider the microscopic scale, where individual-based models can describe individual-agent behaviour and their interactions. In the left column of Figure~\ref{Fig:project_diagram} we present key features of the individual-based model introduced in~\cite{Aceves2020}.  
The model assumes that agents move trying to avoid obstacles via a repulsion force. 
Agents interact with each other following Vicsek-type dynamics \cite{Manhart2018,Ha2008,jiang2016hydrodynamic,Vicsek1995}, \textit{i.e.} they move at a constant speed trying to align their orientation of motion with the one of their neighbours, up to some noise, while repelling each other at short distances.  
The discrete system gives the time-evolution of the position of the obstacles $(X_i)_{i=1,\hdots, N}$ tethered at fixed anchor points $(Y_i)_{i=1,\hdots, N}$ \emph{via} a spring and the position and orientation of the self-propelled agents $(Z_k, \alpha_k)_{k=1, \hdots, M}$, where $X_i, Y_i, Z_k\in \mathbb{R}^2$ and $\alpha_k$ is a unit vector (see Eq.~\eqref{eqn:IBMmodel_massless} for a full mathematical description of the system and Figure~\ref{Fig:project_diagram} for a list of the most relevant parameters). We will explore the variety of patterns that arise depending on the values of the model parameters.

However, the simulation of the discrete model becomes quickly computationally challenging for systems composed of millions of individuals. 
Therefore, for large-particle systems,  continuum models are to be preferred since they provide information on the average behaviour and are computationally less costly (right column of Figure \ref{Fig:project_diagram}). Moreover, continuum models are the appropriate framework for studying large scale patterns and carry out mathematical analyses like linear stability analysis. The drawback is that, from a modelling perspective, they are harder to justify than individual based models. For this reason, one would like to derive the continuum dynamics from the discrete ones: this derivation validates the continuum models and provides understanding on the emergence of large-scale patterns. 
At the same time, during this derivation process, due to averaging and asymptotic analysis, some information on the discrete system can be lost.

 This rigorous derivation is precisely one of the purposes of kinetic theory. Kinetic theory has been successfully applied to the study of models like the Vicsek model \cite{Manhart2018,Ha2008,jiang2016hydrodynamic,Vicsek1995} and the Cucker-Smale model \cite{AS2019,carrillo2010asymptotic,cucker2007mathematics}. 
 Tools from kinetic theory were applied in \cite{Aceves2020} to the discrete model described above, see second and third rows in Figure \ref{Fig:project_diagram}. 
 
 First, the authors derive the mean-field limit equation (large-particle limit $N,M\to \infty$ for both agents and obstacles). This equation corresponds to a Kolmogorov-Fokker-Plank equation for the time-evolution of the distribution of the agents $g=g(z,\alpha,t)$ at position $z\in \mathbb{R}^2$ and orientation $\alpha$; and the time-evolution of the distribution of the obstacles  $f=f(x,y,t)$ at position $x\in \mathbb{R}^2$ with anchor point at $y\in \mathbb{R}^2$. 

Then, from the kinetic equations for these distributions, the authors in \cite{Aceves2020} obtained continuum equations for the system under some asymptotic assumptions on the parameters (right blue boxes in Figure \ref{Fig:project_diagram}). 
In particular, it is assumed a high stiffness of the obstacle springs, strong local agent-agent repulsion and fast agent alignment. 
In this regime, it was shown in \cite{Aceves2020} that the obstacle density $\rho_f=\rho_f(x,t)$ becomes a non-local function of the agent density $\rho_g=\rho_g(x,t)$ and that the continuum model consists of a system of two non-linear non-local equations for $\rho_g$ and the local mean orientation of the agents $\Omega=\Omega(x,t)$, see Eqs. \eqref{eqmacro}.

\bigskip
The main objective of this article is to investigate the influence of the tethered obstacles in pattern formation using the discrete and continuum models {first introduced} in \cite{Aceves2020}. 
The main contributions of this paper are listed below:
\begin{itemize}
    \item We focus our study primarily on the continuum equations (which were analysed only in dimension one in \cite{Aceves2020}). Here we introduce 2D simulations of the continuum equations and an extensive phase diagram (Sec. \ref{MicMac}) that shows the appearance of patterns depending on the value of the parameters (green box in Fig. \ref{Fig:project_diagram}). {We carry out} a linear stability analysis in 2D around uniform states and validate {this analysis} by {comparing} its predictions with the numerical simulations of the discrete and continuum models (right yellow box in Fig. \ref{Fig:project_diagram}).
    \item We document in which parameter regime the continuum equations capture the discrete patterns (bottom grey box in Fig. \ref{Fig:project_diagram}). To this aim, we propose a method to compare discrete and continuum simulations. {This novel} method provides an indicator of the distance between different patterns.
    \item Lastly, we {also} expand and greatly systematize the parameter exploration of the discrete model {supported by} a phase diagram. As a consequence, we detect two new patterns with respect to reference \cite{Aceves2020}: honeycombs structures and pinned agents states (left green box in Fig. \ref{Fig:project_diagram}).
\end{itemize}

\paragraph{Organisation of the paper.}
The paper is organized as follows: we first describe the models (discrete and continuum), including the derivation assumptions of the continuum model. Then we simulate both systems to construct {two corresponding} phase diagrams based on different values of the parameters. Next, to {better} understand pattern formation as function of the model parameters, we perform a linear stability analysis of the continuum equations around uniform states and identify bifurcation parameters controlling the formation of patterns. Finally, an {innovative} method is proposed to compare discrete and continuum simulations, which is used to determine in which parameter regime the continuum equations are in good accordance with the discrete dynamics.
We conclude the paper with a discussion of the main results.


\section{Modeling}
\subsection{Discrete dynamics}\label{IBM}
 We consider as a starting point the model introduced in \cite{Aceves2020} for self-propelled particles undergoing collective motion in an environment filled with obstacles. 
 Obstacles are tethered to a given fixed anchor point through a Hookean spring. 
They are characterised by their positions $X_i(t)\in \R^2$ over time $t\geq 0$ and their anchor points $Y_i\in \R^2$ for $i = 1, 2, \ldots, N$, where $N$ is the total number of obstacles. The self-propelled particles  are characterised by their positions $Z_k(t)\in \R^2$ and orientations $\alpha_k(t)\in \mathbb{S}^{1}$ (unit circle) at time $t\geq 0$, $k = 1, 2, \ldots , M$, where $M$ is the total number of agents.
We assume that  obstacles and agents interact through a given potential, as explained next.

The evolution for the obstacles $(X_i(t), Y_i)_{i=1,\hdots,N}$ and the agents $(Z_k(t), \alpha_k(t))_{k=1,\hdots,M}$ over time is given by the following coupled system of stochastic differential equations: 
\begin{subequations}
\label{eqn:IBMmodel_massless}
\begin{align}
 \d X_i                                 =& -\frac{\kappa}{\eta}(X_i-Y_i)\d t -\frac{1}{\eta}\frac{1}{M}\sum_{ k = 1}^M \nabla \phi \left(X_i - Z_k\right) \d t+ \sqrt{ 2 d_o} \, \d B^{i}_t , \label{eq:IBM_obstacle}\\
  \d Z_k                                 =& u_0\alpha_k \d t - \frac{1}{\zeta}\frac{1}{N}\sum_{i = 1}^N \nabla \phi \left(Z_k - X_i\right)\d t -\frac{1}{\zeta}\frac{1}{M} \sum_{l \neq k}^M \nabla \psi \left(Z_k - Z_l\right)\d t,\label{eq:IBM_swim_positions}\\
  \d \alpha_k                         =& P_{ \alpha_k^\perp} \circ \bigg[ \nu \bar{\alpha}_k \d t + \sqrt{2 d_s} \, \d \tilde B^{k}_t \bigg]\label{eqn:IBMorient},
\end{align}
\end{subequations}
where the mean direction $\bar{\alpha}_k$ is defined via the mean flux $J_k$ as follows
\begin{align}
\label{eqn:meanDir}
\bar{ \alpha}_k = \frac{ J_k}{ | J_k|},\quad \text{ where } J_k =\mkern-18mu \sum_{\substack{j = 1 \\ |Z_k-Z_j|\leq r_A}}^M \mkern-18mu \alpha_j.
\end{align}
Eq. \eqref{eq:IBM_obstacle} gives the time-evolution for the obstacles' positions $X_i$. 
The first term on the right-hand side corresponds to the force generated by the Hookean spring anchored at position $Y_i$ with stiffness constant $\kappa>0$. 
The tether positions $Y_i$ are given and do not change over time. 
The terms $B^i$, $i=1,\hdots, N$ are independent Brownian motions that introduce noise in the dynamics with intensity $d_0>0$. This term accounts for fluctuations in the dynamics. 
Finally, the second term on the right-hand side of Eq. \eqref{eq:IBM_obstacle} is precisely the interaction force that couples the dynamics of the self-propelled agents with the ones of the obstacles. We assume that $\phi$ is an even and non-negative {interaction} potential. Typically we will assume $\phi$ to be a repulsive potential to model volume exclusion between obstacles and self-propelled particles.

Now, Eq. \eqref{eq:IBM_swim_positions} gives the time-evolution for the position of the self-propelled agents $Z_k$. The first term on the right-hand-side of \eqref{eq:IBM_swim_positions} expresses that agent $k$ moves in the orientation $\alpha_k$ at a fixed speed $u_0>0$. The second term is the force due to the {interaction} potential coupling the self-propelled agents and the obstacles, as we have seen before. Finally, the last term is a {repulsive} force between agents given by a potential $\psi$ which is assumed to be non-negative and even. This force is added to the model to prevent agents clustering at a single point in space and represents volume exclusion interactions between the agents \cite{degonddimarco2015}. 

The last equation \eqref{eqn:IBMorient} gives the time-evolution for the orientation of the agents and corresponds to the terms appearing in the Vicsek model \cite{Degond2008} which is a widely used model in collective motion. 
The right hand side of Eq. \eqref{eqn:IBMorient} is the sum of two competing forces: a force that tries to align the orientation of the self-propelled agents with the mean orientation of their neighbours and a noise term that opposes this alignment. The noise is given by $(\tilde B^k)_{k=1,\hdots, M}$ which are $M$ independent Brownian motions (also assumed to be independent from $B^i$, $i=1,\hdots, N$) and the intensity of this noise is given by the parameter $d_s>0$. The operator $P_{\alpha_k^\perp}$ represents the orthonormal projection onto $\alpha_k^\perp$ (where $\alpha_k^
\perp$ is a vector orthogonal to $\alpha_k$) and the symbol  $'\circ'$ indicates that the stochastic differential equation has to be understood in the Stratonovich sense \cite{hsu2002stochastic}. 
In particular, the projection ensures that, for all times where the dynamics are defined, $\alpha_k(t)$ remains on the sphere, \textit{i.e.}, $|\alpha_k|=1$. 
The alignment force is given by $P_{\alpha_k^\perp}\nu \bar \alpha_k$ where $\nu>0$ is a positive constant and $\bar \alpha_k$ is the average orientation of the neighbouring agents that are at distance $r_A>0$ from agent $k$, as computed in Eq. \eqref{eqn:meanDir}. Indeed, this term corresponds to an alignment force since it can be rewritten as
$$P_{\alpha_k^\perp}\nu \bar \alpha_k= \nu\nabla_{\alpha_k}(\alpha_k\cdot \bar \alpha_k),$$ 
where $\nabla_{\alpha_k}$ denotes the gradient on the sphere. 
Therefore, this term is a gradient flow that relaxes $\alpha_k$ towards the average orientation $\bar \alpha_k$ at speed $\nu>0$.

\bigskip
Finally, notice that the discrete model \eqref{eqn:IBMmodel_massless} consists of first order equations: the  model can be derived from second order equations in the overdamped (or inertialess) regime. This is the reason why the parameters $\eta>0$ and $\zeta>0$ appear in the system: $\eta$ corresponds to the obstacle friction and $\zeta$ to {the} agent friction. 
In an inertialess regime first-order equations give a good approximation of the dynamics and this regime appears in many biological applications, in particular involving micro-agents (like sperm cells) in highly viscous environments.

As we will see in later sections, the feedback interactions between agents and between agents and obstacles give rise to a variety of patterns depending on the value of the parameters.

\subsection{Continuum dynamics}\label{Derivation}
When the number of agents and obstacles becomes large, it is useful to derive equations that determine the average behaviour of the discrete system \eqref{eqn:IBMmodel_massless}. These `averaged' equations correspond to continuum equations, which were derived in \cite{Aceves2020} for the discrete system \eqref{eqn:IBMmodel_massless}. In this section we summarise the results from this reference.

\subsubsection{Main assumptions of the derivation}
\label{sec:assumptions}

The derivation of the continuum equations in \cite{Aceves2020} is done under the following set of assumptions:

\paragraph{(a) Large-particle-system assumption.}
The number of obstacles and agents are assumed to tend to infinity, \textit{i.e.}, $N\to \infty$, $M\to \infty$.\\
Under this assumption, the authors derived formally equations for the evolution of obstacles and agent density (kinetic equations). Then, some of the parameters of the kinetic equation are scaled by a small factor $\varepsilon \ll 1 $ and the continuum equations are obtained in the limit $\varepsilon\to 0$. We explain next the scaling assumptions considered.

\paragraph{(b) Scaling assumptions on the parameters.}
Three types of scaling assumptions are made: 
\begin{inparaenum}[(i)]
    \item the radius of alignment of the agents is supposed to be small and scaled as $r_A=\mathcal{O}(\sqrt\varepsilon)$;
    \item the agent-agent repulsion distance is supposed to be small and scales as $r_R = O(\varepsilon)$, but it is ensured that the potential stays of order 1 by setting
    \begin{equation} \label{eq:def_mu}
        \int \psi(x) dx = \mu <\infty;
    \end{equation}
    \item the agents alignment rate $\nu$ and orientational noise intensity $d_s$ in \eqref{eqn:IBMorient} are supposed to be very large and scale as: $d_s,\nu = \mathcal{O}(\frac{1}{\varepsilon})$ with $\frac{d_s}{\nu} = \mathcal{O}(1)$: this corresponds to fast agent-agent alignment and diffusion \cite{Degond2008}.
\end{inparaenum}

\paragraph{(c) Uniform anchor density and stiff regime assumptions.} It is  assumed that the anchor density for the obstacles is constant (uniformly distributed) and that the obstacles' springs are very stiff (the parameter $\kappa$ is very large). To this aim, we consider the ratio 
\begin{equation} \label{eq:def_gamma}
\gamma = \frac{\eta}{\kappa} \ll 1
\end{equation}
to be small. We suppose also a low obstacle noise regime\st{,} by considering the smallness of 
\begin{equation} \label{eq:delta}
\delta = d_o \gamma \ll 1 .
\end{equation}

\bigskip
The set of assumptions (a) is sufficient to derive continuum equations. 
The large-particle-limit or mean-field limit gives rise to kinetic equations for the obstacle density $f=f(t,x,y)$ and the agent density $g=g(t,z,\alpha)$. 
The set of assumptions (b) and (c) are sufficient to obtain closed equations for the obstacle density $\rho_f=\rho_f(t,x)$, the agent density $\rho_g=\rho_g(x,t)$ and the mean-agent orientation $\Omega=\Omega(x,t)$. In particular, the scaling assumptions $r_A=\mathcal{O}(\sqrt{\varepsilon})$ and $r_R=\mathcal{O}(\varepsilon)$ imply that alignment and agent-agent repulsion forces become localized in space as $\varepsilon\to 0$. The set of assumptions (c)  is used to Taylor expand the function $f$ with respect to $\gamma$ and $\delta$.

\bigskip
In summary, the continuum equations approximate a system with a very large number of agents and obstacles in the regime where the parameters of the system reach a given range of values, as described above, \textit{i.e.}, in the regime $\varepsilon\to 0$ (by an asymptotic analysis) and $\gamma \approx 0$, $\delta \approx 0$ (by a Taylor expansion approximation). 
These approximations {will be} taken into account when comparing discrete and continuum simulations, since they determine the range of validity of the continuum dynamics.

\subsubsection{The continuum model}
\label{sec:macro_equations}

The authors in \cite{Aceves2020} obtain the following equations for the dynamics of the density of agents $\rho_g(x,t)~\in~\R$ and their mean orientation $\Om(x,t)~\in~\mathbb{S}^1$ at a point $x~\in~\R^2$ at time $t~\geq~0$:
\begin{align}
&\pd_t \rho_g+\nabla \cdot \left(U\rho_g\right)=0, \label{eqmacro}\\
&\rho_g \pd_t \Om + \rho_g \left(V\cdot \nabla \right)\Om + d_3 P_{\Om^\perp} \nabla \rho_g=\g_s P_{\Omega^\perp} \Delta (\rho_g \Om), \nonumber
\end{align}
where
\begin{align*}
&U=d_1\Om -\frac{1}{\zeta}\nabla \bar\rho_f-\frac{\mu}{\zeta}\nabla \rho_g,\\
&V=d_2\Om -\frac{1}{\zeta}\nabla \bar\rho_f-\frac{\mu}{\zeta}\nabla \rho_g,
\end{align*}
where $\rho_f(x,t)$ is the obstacle density given by:
\begin{align} \label{eq:rhof}
&\rho_f/\rho_A=1+\frac{1}{\kappa}\Delta \bar\rho_g+\frac{1}{\kappa^2}\mathcal{N}(\bar\rho_g)-\frac{\eta}{\kappa^2}\pd_t\Delta \bar \rho_g+\mathcal{O}\left(\left( \frac{\eta}{\kappa}\right)^3\right),
&\mathcal{N}(\bar\rho_g):=\text{det}\mathbb{H}(\bar\rho_g),
\end{align}
where $\rho_A$ is the distribution of the anchor points in space (assumed to be constant and here taken to be equal to $1$ in the simulations and computations); $\mathbb{H}$ denotes the Hessian, `det' denotes the determinant, and we have defined 
\begin{equation} \label{eq:bar_symbol}
\bar\rho:=\rho * \phi,
\end{equation}
the convolution between $\rho$ and $\phi$, where $\phi$ is the repulsion kernel between agents and obstacles, Eq.~\eqref{potentials}. 
In the numerical simulations we {will} drop the higher order terms in $\eta/\kappa$ for $\rho_f$.
The model parameters are the friction constants $\zeta$, $\eta$, the obstacle-spring constant $\kappa$, and the agent-agent repulsion intensity $\mu$ given by Eq.~\eqref{eq:def_mu}.

\medskip

The friction coefficient $\g_s$ reads 
\begin{equation}\label{viscocoeff} 
\gamma_s = \frac{r_A^2}{8} \left(\frac{d_s}{\nu} + c_2\right).
\end{equation}

\medskip

The constants $d_1$, $d_2$ and $d_3$ are defined by
\begin{equation} \label{eq:dcoeff}
d_i=u_0 c_i,
\end{equation}
where $u_0$ is the agent speed, and $c_1$, $c_2$ and $c_3$ are explicit constants that depend only on the fraction $d_s/\nu$:
\begin{subequations}
\label{eq:c_constants}
\begin{align}
c_1 &= \int_0^{2\pi} \cos \theta \, m(\theta)\, d\theta,\\
c_2 &=\frac{\int_0^{\pi} \sin^2\theta\cos\theta \, m(\theta)h(\theta)\, d\theta}{\int_0^ {\pi}\sin^2\theta\, m(\theta) h(\theta)\, d\theta},\\
c_3 &=d_s/\nu,
\end{align}
\end{subequations}
where
$$
m(\theta) =\frac{1}{Z} \exp\left(\frac{\nu}{d_s} \cos \theta\right), \qquad Z:=\int_0^{2\pi}\exp(\nu \cos \theta/d_s) \, d\theta.
$$
and where the function $h$ does not have a explicit form but it is the solution to a differential equation. 
Specifically, $h(\theta)=g(\theta)/\sin(\theta)$ where
$g$ is the unique solution (for the exact functional space in which this unique solution is defined, the reader is referred to \cite[Lemma 2.3]{degonddimarco2015})
$$\frac{\nu}{d_s} \sin\theta \frac{\mbox{d}g}{\mbox{d}\theta}+ \frac{\mbox{d}^2g}{\mbox{d}\theta^2}= \sin\theta.
$$

 \bigskip
 For an explanation on the meaning of these equations the reader is referred to \cite{Aceves2020}. 
 We just point here that the system \eqref{eqmacro} for $(\rho_g,\Omega_g)$ corresponds to the so-called Self-Organised Hydrodynamics with Repulsion (SOHR) \cite{degonddimarco2015} in the case where $\nabla_x \bar \rho_f=0$ (\textit{i.e.} when there is no influence from the obstacles). 
 The SOHR is the continuum version of the Vicsek model with agent-agent repulsion \cite{degonddimarco2015}.

\begin{remark}[Approximation for $\rho_f$ and blow-up]
\label{rem:validity_rho}
The density $\rho_f$ may take negative values: in that case the continuum simulations {will be}stopped. Notice also that solutions may `blow-up' in the sense that particle densities may concentrate at points in space.
\end{remark}

\section{Patterns: phase diagrams}

\subsection{Discrete dynamics}\label{sec:micro_simus}

\subsubsection{Simulation set up}

We here show some simulations of the discrete model \eqref{eqn:IBMmodel_massless} to give an overview of the different types of patterns that emerge depending on the values of the parameters. 
Simulations are performed with $N=M=3000$ agents and obstacles initially distributed uniformly in the periodic domain $U = [0,1]\times [0,1]$. We also suppose that anchor points $Y_k$ for the obstacles are uniformly distributed in $U$, and fix the initial agent direction to $\pi/4$. 

\medskip
We consider the following expressions for the agent-agent and agent-obstacle {repulsion} potentials:
\begin{equation}\label{potentials}
\psi(x) =  \frac{6 \mu}{\pi r_R^2} \left(1 - \frac{|x|}{r_R}\right)_+^2,  \qquad \phi(x) = \frac{3 C_\phi}{2 \pi \tau} \left(1 - \frac{|x|}{\tau}\right)_+^2,
\end{equation} 
where 
$$
x^2_+=\left\{
\begin{array}{cc} x & \mbox{if }
x^2\in \R_+, \\
0 & \mbox{if }x<0.
\end{array}
\right.
$$
Therefore, both potentials are compactly supported and act in a radius $r_R>0$ for agent-agent repulsion and a radius $\tau>0$ for agent-obstacle repulsion. Notice that the constants have been chosen such that 
$$\mu=\int \psi(x) dx  \quad \mbox{ and }\quad C_\phi= \int |\nabla \phi|(x)dx .$$

\medskip
We fix a set of parameters as described in Table \ref{table_param}, and focus our study on the interplay between three parameters: the obstacle spring stiffness $\kappa$,  the agent friction $\zeta$ and the agent-agent repulsion intensity $\mu$. 

\medskip

\subsubsection{Phase diagram}
 Fig. \ref{simus_micro} shows the output of the simulations at time $t=10$: at this time agents and obstacles patterns seem to have reached a steady state. In this figure, agents' positions and their orientations are represented with black arrows and obstacle's positions with blue dots. 
 The output of the simulations are grouped into three panels: panel (A) corresponds to weak obstacle spring stiffness $\kappa = 10$, and  panels (B) and (C) correspond to mild $\kappa = 100$ and strong $\kappa = 1000$ obstacle spring stiffness, respectively. 
 Inside each panel, we arrange the simulations in a table: right-to-left columns correspond to increasing values of the agent-agent repulsion force $\mu$, bottom-to-top rows correspond to increasing values of the friction coefficient $\zeta$.
 Notice that the value for the agent-agent repulsion force $\mu$ is not taken the same in all panels. Indeed, the values for $\mu$  selected are the ones that make different patterns appear in the simulations. We will justify further the particular choice of the parameters after the linear stability analysis of the continuum equations. 
 Notice that the values for $\zeta$ are also different in panel (C). We refer the reader to the caption of Fig. \ref{simus_micro} for the exact choices  for the parameter values of $\mu$ and $\zeta$. Finally, we point out that the figures marked with a red cross are the ones for which the videos can be found in the supplementary material (see Appendix \ref{AppendixA} for more details).
  
   From Fig. \ref{simus_micro} we observe that a rich variety of agents' patterns emerges when varying the spring stiffness $\kappa$, the intensity of the agent-agent repulsion $\mu$, and the friction coefficient $\zeta$. 
 
We classify these patterns into 4 main types and we outline the parameter regions corresponding to each with frames of different colors in Fig. \ref{simus_micro}:
\begin{itemize}
\item Trails of agents (framed in red): agents organize into trails inside the obstacle pool. This behavior is mainly observed for weak and mild  obstacle spring stiffness ($\kappa=10$, panel (A) and $\kappa=100$, panel (B) of Fig. \ref{simus_micro}, respectively) 
\item Honeycomb organization of the agents (framed in orange): For small obstacle spring stiffness $\kappa=10$ (panel  (A) of Fig. \ref{simus_micro}) and mild agent-agent repulsion $\mu> 0.1$ (middle columns), we observe that the agents organize into fixed honeycomb structures, framing the obstacles which concentrate into aggregates of different sizes and shapes (not necessarily round). We point out that this pattern was not detected in the previous publication \cite{Aceves2020}.
\item Travelling bands of agents (framed in yellow): only observed for large values of the obstacle spring stiffness $\kappa=10^3$ and large agent friction with the environment $\zeta=5$, here the agents organize into bands perpendicular to their direction of motion. The width of the bands increases with the agent-agent repulsion intensity $\mu$ (from left to right plots of the first row of panel (C)). 
\item Clusters of agents (framed in green): agents organize into clusters more or less round depending on the regime of parameters. Cluster formation appears in all regimes of obstacle spring stiffness $\kappa=10,10^2,10^3$ (all three panels), and the size of the clusters changes depending on the obstacle spring stiffness $\kappa$ and on the agent-agent repulsion intensity $\mu$ but seems independent of the agent friction $\zeta$. Particularly, we observe that the cluster sizes increase with $\mu$, until a point is reached in which $\mu$ is so large that agent-agent repulsion counteracts all the other aggregation forces (right columns of Fig. \ref{simus_micro}). Moreover, the parameter $\mu$ acts as a phase transition parameter between different types of patterns. During the transition from clustered to near-homogeneous agent distributions with increasing repulsion intensity $\mu$, we observe a passage to other pattern types such as trails (for weak $\kappa=10$ or mild $\kappa=100$ obstacle stiffness), or honeycomb organizations (for weak obstacle stiffness).
Finally, we note that for large obstacle spring stiffness $\kappa$ and small agent friction $\zeta$ (bottom row of panel (C), simulations marked with a green star) we observe the formation of  'pinned' clusters where the agents are grouped into very small clusters that do not move (see Supplementary material and Appendix \ref{AppendixA} for access to the videos)
\end{itemize}
Each of these agent patterns is surrounded by obstacles that are kept at a given distance from the agents. This distance depends on the stiffness of the obstacles' springs $\kappa$ and the agent-agent repulsion intensity $\mu$. On one hand, if obstacles are loose enough (\textit{i.e.}, $\kappa$ is small), the repulsion force between the agents and the obstacles may be large enough to keep them both at approximately the obstacle-agent repulsion distance $\tau$ (defined in the potential $\phi$, Eq.  \eqref{potentials}). On the other hand, agent-agent repulsion opposes this effect, by giving the agent population force to go against the pressure exerted by the obstacle pool. We indeed observe that increasing the agent-agent repulsion force $\mu$ (left-to-right columns of Fig. \ref{simus_micro}) decreases the typical distance between the agent structures and the obstacles.

\begin{figure}
\centering
\includegraphics[scale=0.6]{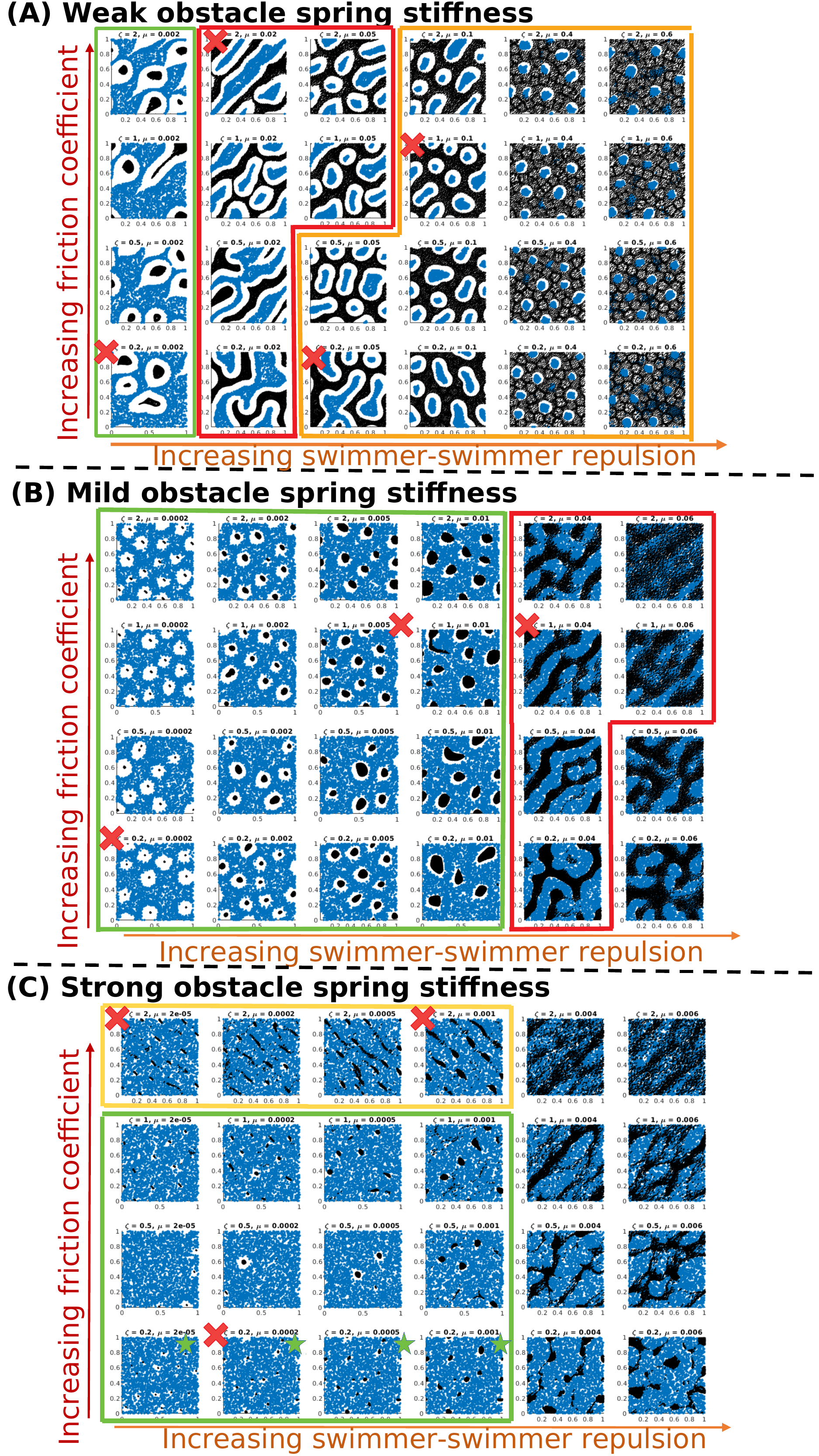}
\caption{Simulations of the discrete model for the parameters indicated in table \ref{table_param}. Agents are represented as black arrows giving their direction of motion, obstacles are represented as blue circles.  Panel  (A): for weak obstacle stiffness $\kappa = 10$, panel  (B): for mild obstacle stiffness $\kappa = 100$, panel  (C): for large obstacle stiffness $\kappa = 1000$. In each panel, the vertical axis represents different values of the friction coefficient $\zeta$ (from bottom to top: $\zeta = 0.2, 0.5, 1, 2$ for panels (A) and (B) and $\zeta=0.2,1,2,5$ for panel (C); and the horizontal axis represents different values of the agent-agent repulsion $\mu$: panel  (A): $\mu~\in~\{0.002, 0.02, 0.05, 0.1,  0.4, 0.6 \}$ , panel  (B):  $\mu~\in~\{0.0002, 0.002, 0.005, 0.01,  0.04, 0.06 \}$, panel  (C): $\mu~\in~\{2. 10^{-5}, 2. 10^{-4},6. 10^{-4},2. 10^{-3}, 4.10^{-3}, 6.10^{-3} \}$.  \label{simus_micro}}
\end{figure}

\begin{table}
\centering
\begin{tabular}{|c|c|c|}
\hline
Parameters & Value & Description\\
\hline
$N$ & 3000 & number of obstacles\\
$M$ & 3000 &  number of agents\\
$u_0$ & 1  & agent speed\\
$r_R$ & 0.075 & agent-agent repulsion distance\\
$r_A$ & 0.1 & agent-agent alignment distance\\
$\nu$ & 2 & agent-agent alignment intensity\\
$\tau$ & 0.15 & agent-obstacle repulsion distance\\
$C_\phi$ & 5 & agent-obstacle repulsion intensity\\
$d_s$ & 0.02 &  noise in the agents' orientation\\
$\eta$ & 1 & obstacle friction\\
$d_0$ & 0 & obstacle positional noise\\
$\mu$ & various & agent-agent repulsion intensity\\
$\zeta$ & various & friction constant of the agents\\
$\kappa$ & various & spring constant coefficient\\
\hline
\end{tabular}
\caption{Parameters used for the discrete simulations of Fig. \ref{simus_micro}. The various values considered for $\mu,\zeta,\kappa$ are specified in the caption of Fig. \ref{simus_micro}. \label{table_param}}
\end{table}

\subsection{Continuum dynamics}\label{MicMac}
In this section we show numerical simulations of the continuum equations \eqref{eqmacro} using the numerical scheme detailed in Appendix \ref{num_method}. 

\subsubsection{Simulation set up}\label{simus_mac}
We perform simulations of the continuum  model on the periodic domain $U = [0,1]\times [0,1]$ discretized with space step $\Delta x \approx 6.7 \; 10^{-3}$ ($150$ discretization points in each direction). The initial homogeneous agent direction $\Omega_0$ is set to $\frac{\pi}{4}$, and initial agent density $\rho_g$ is a small perturbation of a uniform distribution with $\int_\Omega \rho_g = 1$. In order to compare the numerical results with the discrete model, we use the same parameters as for the discrete simulations presented in Section \ref{sec:micro_simus} (see Table \ref{tab:table_param_macro}).

Notice that the agent-agent alignment distance at the continuum level is chosen to be $r_A = 0.15$ whereas for the discrete simulations it was $0.1$.
This choice corresponds to having rescaled $r_A$ approximately by a scaling factor $\varepsilon=0.5$, \textit{i.e.}, $r_A' = \sqrt{\eps}r_A$, where $r_A'=0.1$ is the parameter used in the discrete simulation (see the scaling assumption (b) in Sec. \ref{sec:assumptions}). 
Note that only the ratio $d_s/\nu$ is relevant for the continuous model, independently of their individual values. We therefore just ensure that this ratio is kept the same as for the discrete simulations and use $d_s/\nu=0.01$.
\medskip

\medskip

\begin{table}
\centering
\begin{tabular}{|c|c|c|}
\hline
Parameters & Value & Description\\
\hline
$h$ & $\approx 6.7\cdot 10^{-3}$  & step-size spatial discretization\\
$v_0$ & 1  & agent speed\\
$r_A$ & 0.15 & agent-agent alignment distance\\
$\frac{d_s}{\nu}$ & 0.01 & parameter coming from alignment forces\\
$\tau$ & 0.15 & agent-obstacle repulsion distance\\
$C_\phi$ & 5 & agent-obstacle repulsion intensity\\
$\eta$ & 1 & obstacle friction\\
$\gamma_s$ & $28\cdot 10^{-4}$ & viscosity coefficient \\
 $\mu$ & various & agent-agent repulsion intensity\\
  $\zeta$ & various & agent friction constant\\
  $\gamma$ & various & $\gamma = \eta/\kappa$\\
\hline
\end{tabular}
\caption{Parameters used for the simulations of the continuum equations \eqref{eqmacro} shown in  Fig. \ref{simus_macro}. The constants $d_1,d_2,d_3$ only depend on $\nu/d_s$ and are obtained by computing the expressions \eqref{eq:dcoeff} and \eqref{eq:c_constants}. \label{tab:table_param_macro}}
\end{table}

\bigskip
\subsubsection{Phase diagram}
\label{sec:macrovsmicro_simu}
We present the output of the continuum simulations. To facilitate the comparison with the discrete system, we adopt the same representation as the one presented in Fig. \ref{simus_micro}. 
In particular, the continuum densities are discretized as follows: at a simulation time $t$ we distribute randomly $N=3000$ agent points in the domain according to the distribution $\rho_g(\cdot,t)$, and similarly for the obstacle points using $\rho_f(\cdot, t)$. 
In Fig. \ref{simus_macro} we show the simulation results at the final time of the simulation, corresponding either to the time before blow-up or appearance of negative density for the obstacles (see Rem. \ref{rem:validity_rho}) or to $t=10$, as for the discrete simulations. 
As in Fig. \ref{simus_micro}, the simulations are separated in three panels: panel (A) is obtained for weak obstacle stiffness $\kappa = 10$, and panels (B) and (C) are for $\kappa = 100$ and $\kappa = 1000$ respectively. 
In each panel, we organize the simulations in tables for which bottom-to-top rows correspond to increasing values of the friction coefficient $\zeta$, while left-to-right columns correspond to increasing values of the agent-agent repulsion force intensity $\mu$.
See the legend of Fig. \ref{simus_macro} for more details on the parameter values considered for $\zeta$ and $\mu$.

\begin{figure}
\centering
\includegraphics[scale=0.6]{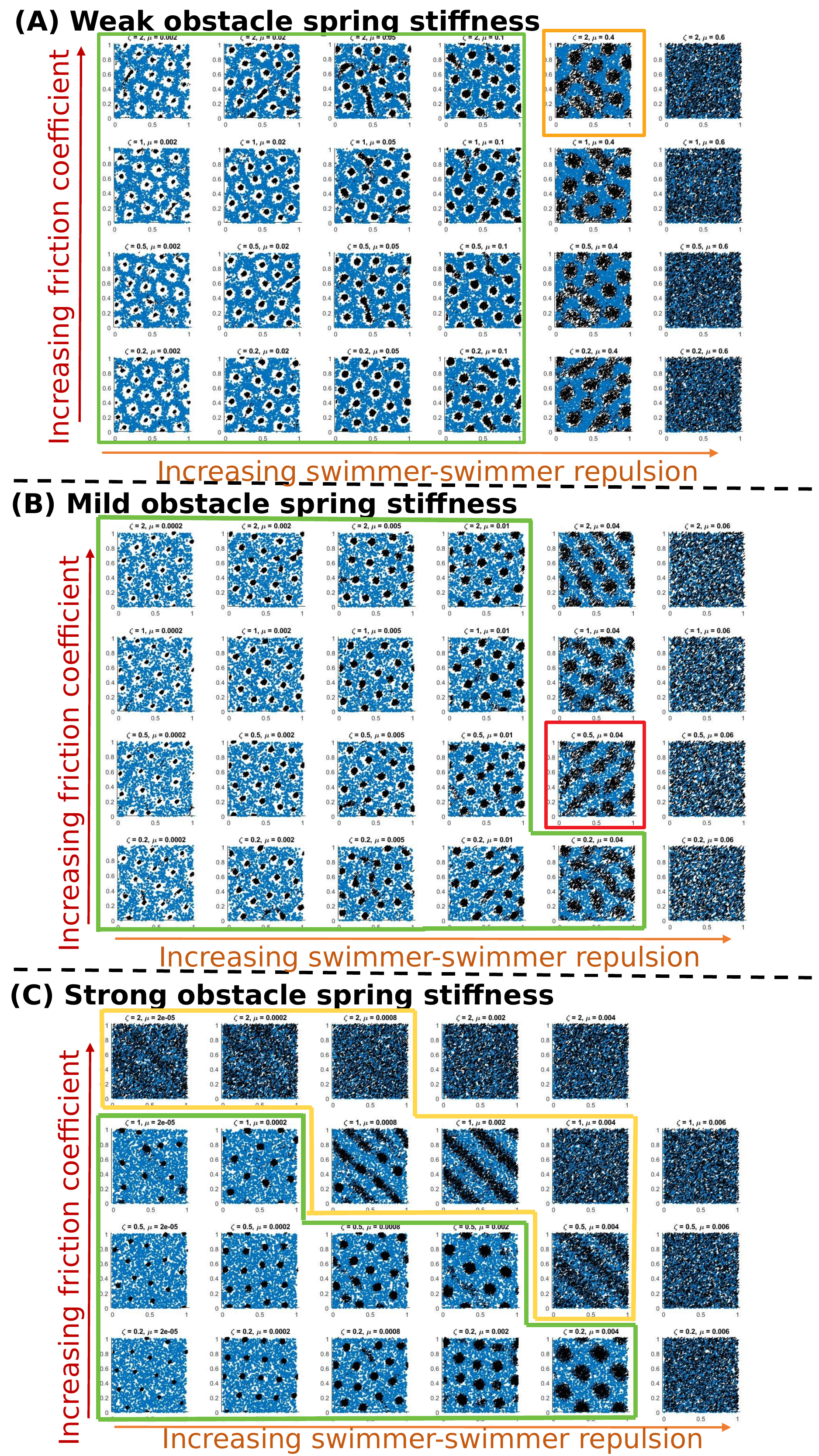}
\caption{Simulations of the continuum model \eqref{eqmacro} for the parameters indicated in table \ref{tab:table_param_macro}. Agents (randomly distributed from the distribution $\rho_g(x,t)$) are represented as black arrows of orientation $\pi/4$, obstacles (randomly distributed from the distribution $\rho_f(x,t)$) are represented as blue circles. panel  (A): for weak obstacle stiffness $\kappa = 10$, panel  (B): for mild obstacle stiffness $\kappa = 100$, panel  (C): for large obstacle stiffness $\kappa = 1000$. In each panel, the vertical axis represents different values of the friction coefficient $\zeta$ (from bottom to top: $\zeta = 0.2, 0.5, 1, 2$ and the horizontal axis represents different values of the agent-agent repulsion $\mu$: panel  (A): $\mu~\in~\{0.002, 0.02, 0.05, 0.1,  0.4, 0.6 \}$, panel  (B): left column: $\mu~\in~\{0.0002, 0.002, 0.005, 0.01,  0.04, 0.06 \}$, panel  (C): $\mu~\in~\{2. 10^{-5}, 2. 10^{-4},6. 10^{-4},2. 10^{-3}, 4.10^{-3}, 6.10^{-3} \}$.  \label{simus_macro}}
\end{figure}

In Fig. \ref{simus_macro} we observe different patterns for the agents, each framed using the same color code as for the discrete simulations: cluster formation (framed in green, present in all three panels), travelling bands (framed in yellow, panel (C), trails (framed in red, panel (B)), near-honeycomb structures (framed in orange, panel (A)), uniform distributions (unframed) and in-between states. Here again, increasing the obstacle spring stiffness $\kappa$ (from top to bottom panels) decreases the distance between agents and obstacles (\textit{i.e.}, the white area around the agents is reduced with increasing $\kappa$). We also observe that increasing the agent-agent repulsion intensity $\mu$ increases the size of the agent clusters and this parameter again serves as a transition parameter between clusters and uniform distribution of the agents, passing through honeycomb structures (first row of panel (A)), trails (third row of panel (B)) or travelling bands (first three rows of panel (C)). The effect of the friction parameter $\zeta$ becomes more relevant for large values of $\kappa$. For example, in panel (C) the parameter $\zeta$ serves as a transition parameter between clusters, trails and uniform states.

\paragraph{Comparing phase diagrams.}
We compare the two phase diagrams from the discrete simulations in Fig. \ref{simus_micro} and the continuum simulations in Fig. \ref{simus_macro}. Note, though, that there is not an exact correspondence of  the values for the parameter $\zeta$ used in panel (C) for the two cases.

It is noteworthy that the patterns observed with the continuum simulations are similar to the patterns of the discrete simulations (Fig. \ref{simus_micro}) for strong and mild obstacle spring stiffness (compare panels (B) and (C) of Figs. \ref{simus_macro} and \ref{simus_micro}), while the two models lead to different types of behavior in the weak obstacle stiffness regime (panel (A)). 
These are expected results since the continuum model has been obtained in a strong obstacle spring stiffness regime ($1/\kappa \approx 0$). As a result, the continuum model seems to be unable to produce the rich variety of patterns offered by the discrete model when considering loose obstacles. Also, we do not observe the pinned state with the continuum model, which appeared with the discrete dynamics when considering large obstacle spring stiffness $\kappa$ and small agent friction $\zeta$. 
Even though pinned-states are observed for large values of $\kappa$, they correspond to states where agents collapse into a very small cluster and then the numerical simulations of the continuum equations blow-up due to a high concentration of the agent density $\rho_g$ (see Rem.  \ref{rem:validity_rho}).


\section{Linear stability of uniform states}

\subsection{Analysis of the continuum model}
\label{sec:stability_analysis}

Continuum equations are amenable to linear stability analysis around constant solutions or uniform states. This is useful because the presence of instabilities signals the formation of patterns. In this section we obtain an explicit condition for the stability of uniform states.

\medskip
Before stating the main result, we introduce the following notation: denote by $\hat \phi$ the Fourier transform of $\phi$ defined as, for $k\in \R^2$:
\begin{align*}
\hat\phi_k:=\hat\phi_{|k|}=\hat \phi(k)=\int_{\R^2} e^{-ik\cdot x}\phi(x) dx \in \R.
\end{align*}
Notice that $\phi$ is assumed rotationally invariant, therefore  $\hat \phi$ is real and rotationally invariant (so we abused notation and wrote $\hat \phi_{|k|}$ instead of $\hat \phi_k$).

\begin{theorem}[Linear instability]
\label{th:linear_stability}
Consider fixed constant values
$\rho_0>0$ and $\Om_0\in \mathbb{S}^1$. Then, the linearized system  of \eqref{eqmacro} around $(\rho_0,\Omega_0)$ is unstable if and only if
\begin{equation} \label{eq:instability_condition}
\mbox{there exists $z> 0$ such that }z^2 (\hat \phi_z)^2>\mu\kappa.
\end{equation}
\end{theorem}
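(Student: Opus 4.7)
The plan is to linearize \eqref{eqmacro} around $(\rho_0,\Om_0)$, Fourier-transform in $x$, and analyze the eigenvalues of the resulting $2\times 2$ symbol as a function of the wavevector $k\in\R^2$. Since $\Om$ is unit-length, I parametrize the perturbation as $\Om=\Om_0+\omega\,\Om_0^\perp+O(\omega^2)$ with a scalar $\omega$, and write $\rho_g=\rho_0+\tilde\rho$. The obstacle expansion \eqref{eq:rhof}, at leading order in $1/\kappa$, gives $\tilde\rho_f=\kappa^{-1}\Delta(\phi*\tilde\rho)$, hence $\tilde{\bar\rho}_f=\kappa^{-1}\Delta(\phi*\phi*\tilde\rho)$. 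Linearizing \eqref{eqmacro} and projecting the orientation equation onto $\Om_0^\perp$—the $\Om_0$-component of $\Delta(\rho_g\Om)$ is killed by $P_{\Om^\perp}$ at leading order—produces a closed linear system for $(\tilde\rho,\omega)$.

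Substituting the plane-wave ansatz $(\tilde\rho,\omega)=(\hat\rho,\hat\omega)\,e^{\lambda t+ik\cdot x}$ and using $\widehat{\phi*f}=\hat\phi\,\hat f$ together with rotational invariance of $\hat\phi_k$, a short calculation yields the dispersion relation
\begin{equation*}
\lambda^2+(a+b)\lambda+ab+c=0,
\end{equation*}
where, writing $P_z:=\mu-z^2\hat\phi_z^2/\kappa$,
\begin{equation*}
a=id_1(\Om_0\cdot k)+\tfrac{\rho_0|k|^2}{\zeta}P_{|k|},\quad b=id_2(\Om_0\cdot k)+\g_s|k|^2,\quad c=d_1d_3(\Om_0^\perp\cdot k)^2.
\end{equation*}
Note that condition \eqref{eq:instability_condition} is exactly the existence of some $z>0$ with $P_z<0$.

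For the sufficiency direction, I specialize to $k=z\Om_0$, so that $\Om_0^\perp\cdot k=0$ and $c=0$. The quadratic splits, with roots $-a$ and $-b$; since
\begin{equation*}
\operatorname{Re}(-a)=-\tfrac{\rho_0 z^2}{\zeta}P_z>0\quad\Longleftrightarrow\quad P_z<0,
\end{equation*}
an unstable mode is produced whenever some $z>0$ satisfies \eqref{eq:instability_condition}.

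For the necessary direction, assume $P_z\ge 0$ for all $z>0$; I show $\operatorname{Re}(\lambda)\le 0$ for every $k$ and both branches. Set $\beta_1:=\operatorname{Re}(a)\ge 0$, $\beta_2:=\operatorname{Re}(b)=\g_s|k|^2\ge 0$, $p:=(d_1-d_2)(\Om_0\cdot k)$, $q:=\beta_1-\beta_2$. Writing the principal square root as $\sqrt{(a-b)^2-4c}=X+iY$ with $X\ge 0$, the identity $(a-b)^2-4c=q^2-p^2-4c+2ipq$ gives $X^2-Y^2=q^2-p^2-4c$ and $XY=pq$. The real parts of the two roots are $\tfrac12(-(\beta_1+\beta_2)\pm X)$, so stability amounts to $X\le\beta_1+\beta_2$. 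Squaring this and using $(\beta_1+\beta_2)^2-q^2=4\beta_1\beta_2$, one reduces it to the elementary inequality
\begin{equation*}
-p^2\beta_1\beta_2\le(\beta_1+\beta_2)^2(\beta_1\beta_2+c),
\end{equation*}
which holds trivially since $\beta_1,\beta_2,c\ge 0$. The main obstacle is precisely this last step: because the dispersion has complex coefficients, classical Routh--Hurwitz arguments do not apply, and one must instead control the real part of the complex square root directly, verifying that the algebraic identities for $X^2-Y^2$ and $XY$ indeed collapse to the simple bound above under the sign hypotheses.
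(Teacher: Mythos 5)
Your proof is correct and follows the same overall route as the paper: linearization about $(\rho_0,\Omega_0)$ with the perturbation of $\Omega$ reduced to a scalar $\omega$ along $\Omega_0^\perp$, a plane-wave ansatz leading to a $2\times2$ determinant condition, and sufficiency obtained by taking $k\parallel\Omega_0$ so that the quadratic factors and one root has real part of the sign of $-P_z$. Two points of divergence are worth recording. First, you linearize $\rho_f$ keeping only the $\kappa^{-1}\Delta\bar\rho_g$ term, whereas the paper also retains the $-\frac{\eta}{\kappa^2}\partial_t\Delta\bar\rho_g$ contribution of \eqref{eq:rhof}; this produces the extra factor $G(|k|)=1+\rho_0\frac{\eta}{\kappa^2\zeta}|k|^4(\hat\phi_{|k|})^2$ multiplying $\alpha$ in \eqref{eq:condition_existence}, so your dispersion relation is the paper's with $G\equiv1$. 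Since $G>0$ for all $k$, this changes no signs and the criterion \eqref{eq:instability_condition} is unaffected, but you should either carry the term or remark explicitly that it only rescales the first bracket by a positive factor. Second, and more substantively, for the necessity direction with $k\nparallel\Omega_0$ the paper invokes the Routh--Hurwitz criterion for polynomials with complex coefficients, yielding conditions \eqref{eq:RH1}--\eqref{eq:RH2} that are then declared obviously satisfied when $F<0$; you instead control the real part of the principal square root directly and reduce stability to $-p^2\beta_1\beta_2\le(\beta_1+\beta_2)^2(\beta_1\beta_2+c)$, immediate from $\beta_1,\beta_2,c\ge0$. I checked the algebra (via $(\beta_1+\beta_2)^2-q^2=4\beta_1\beta_2$ and the relations $X^2-Y^2=q^2-p^2-4c$, $XY=pq$) and it is right; the only unstated step is that $2(\beta_1+\beta_2)^2-(q^2-p^2-4c)\ge0$ before squaring, which follows from $q^2\le(\beta_1+\beta_2)^2$. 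Your elementary argument is self-contained where the paper relies on a citation, and it makes the paper's ``obviously satisfied'' step explicit. A last cosmetic caveat shared with the paper: the negation of \eqref{eq:instability_condition} only gives $P_z\ge0$, so one concludes $\operatorname{Re}\lambda\le0$ (marginal stability at the threshold) rather than strict negativity.
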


 The proof of the theorem is given later. 
 First, we derive sufficient conditions for the system to be stable:
\begin{corollary}[Conditions for stability]
\label{cor:stability}
Suppose that $\phi$ is absolutely continuous, rotationally invariant, and $\phi,\, \phi'\in L^1$. Then, it holds that
\begin{equation} \label{eq:c_0}
c_0:=\, \underset{z\in \R_+}{\max}z^2 (\hat \phi_z)^2<\infty
\end{equation}
and if $\mu\kappa >c_0,$
then the continuum equations \eqref{eqmacro} are linearly stable.

Moreover, if $\phi$ is given by \eqref{potentials}, define $c_0'=c_0/C_\phi^2$. It holds that the constant $c_0'$ is independent on the obstacle-agent repulsion radius $\tau$ and the intensity $C_\phi$ and the system is stable whenever
\begin{equation*} 
\frac{\mu\kappa}{C_\phi^2}>c_0'.
\end{equation*}
\end{corollary}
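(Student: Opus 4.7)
My strategy is to reduce the corollary to two independent observations, using Theorem~\ref{th:linear_stability} as a black box. The theorem tells us that instability is equivalent to the existence of some $z>0$ with $z^2(\hat\phi_z)^2>\mu\kappa$, so the failure of this condition, i.e., linear stability, is implied by $\mu\kappa>c_0$ provided that $c_0=\sup_{z\geq 0}z^2(\hat\phi_z)^2$ is finite. It therefore suffices to verify (i) that $c_0<\infty$ under the stated regularity of $\phi$ and (ii) that, for the explicit potential \eqref{potentials}, the quantity $c_0'=c_0/C_\phi^2$ does not depend on $\tau$ or $C_\phi$.

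For (i) the key fact is the Fourier-derivative identity $\widehat{\nabla\phi}(k)=ik\,\hat\phi(k)$, which holds as soon as $\phi,\nabla\phi\in L^1(\R^2)$. Absolute continuity of the radial profile together with $\phi'\in L^1$ ensures $\nabla\phi\in L^1(\R^2)$, and then the bound $\|\hat{u}\|_{L^\infty}\leq \|u\|_{L^1}$ applied to $u=\nabla\phi$ gives
\[
|k|\,|\hat\phi(k)|=|\widehat{\nabla\phi}(k)|\leq \|\nabla\phi\|_{L^1}.
\]
Squaring and taking the supremum yields $c_0\leq \|\nabla\phi\|_{L^1}^2<\infty$, which is the first claim.

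For (ii) I would write $\phi=C_\phi\,\tilde\phi_\tau$ with $\tilde\phi_\tau(x)=\tau^{-1}\tilde\phi_1(x/\tau)$ and $\tilde\phi_1(y)=\tfrac{3}{2\pi}(1-|y|)_+^2$. A one-line change of variables produces $\hat{\tilde\phi_\tau}(k)=\tau\,\hat{\tilde\phi_1}(\tau k)$, hence
\[
|k|^2\,\hat\phi(k)^2=C_\phi^2\,|\tau k|^2\,\hat{\tilde\phi_1}(\tau k)^2.
\]
Substituting $z=\tau|k|$ in the supremum over $k\in\R^2$ gives $c_0=C_\phi^2\sup_{z\geq 0}z^2\,\hat{\tilde\phi_1}(z)^2$, whose second factor depends only on the fixed normalized profile $\tilde\phi_1$; this is exactly $c_0'$, independent of $\tau$ and $C_\phi$. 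The condition $\mu\kappa/C_\phi^2>c_0'$ is then just a rewriting of $\mu\kappa>c_0$.

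The only mildly delicate point, and the one I would treat most carefully, is the $L^1$ bound in step (i): the hypothesis "$\phi'\in L^1$" really refers to the radial derivative, and one must translate it into $\nabla\phi\in L^1(\R^2)$ (which for the compactly supported potential \eqref{potentials} in fact gives $\|\nabla\phi\|_{L^1}=C_\phi$ by the chosen normalization, consistent with $c_0'\leq 1$). Once this translation is made, both steps are routine Fourier-analytic manipulations.
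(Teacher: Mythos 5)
Your proposal is correct and follows essentially the same route as the paper: both prove finiteness of $c_0$ via the Fourier-derivative identity $|k|\,|\hat\phi(k)|=|\widehat{\nabla\phi}(k)|\leq\|\nabla\phi\|_{L^1}$, and both obtain $\tau$- and $C_\phi$-independence of $c_0'$ from the self-similar scaling of \eqref{potentials} under the substitution $z=\tau|k|$. Your version is marginally more explicit (the quantitative bound $c_0\leq\|\nabla\phi\|_{L^1}^2=C_\phi^2$ and the remark on translating the radial hypothesis $\phi'\in L^1$ into $\nabla\phi\in L^1(\R^2)$), but this is a refinement of the paper's argument rather than a different one.
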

\begin{proof}
{Since by assumption $\phi$ is absolutely continuous and $\phi, \phi'\in L^1$, we have that $|\hat \phi'(k)|~=~|k||\hat \phi(k)|$. Moreover, since $\phi'\in L^1$, then $\hat \phi'$ is bounded. Therefore, $|k|^2|\hat \phi(k)|^2$ is bounded and} $c_0$ is finite. In this case, for $\mu\kappa>c_0$ the instability condition \eqref{eq:instability_condition} does not hold, so the system is stable.

In the particular case where $\phi$ takes the shape given in \eqref{potentials}, one can check that the following self-similarity condition holds
$$|k|\hat \phi_k = \tau |k| \hat \phi^{(1)}(\tau |k|),$$
where $\phi^{(1)}$ corresponds to $\phi$ when taking $\tau=1$. 
Therefore, it holds that
$$C_\phi^2 c_0=\max_k|k|^2(\hat \phi_k)^2 = \max_k(\tau |k|)^2 (\hat \phi^{(1)}(\tau |k|))^2=\max_y |y|^2 (\hat \phi^{(1)}(|y|))^2,$$
and so $c_0$ is independent of $\tau$. 
The rest of the corollary follows: the value of $c_0'$ is also clearly independent of $C_\phi$ as it is just a multiplicative factor of $\phi$.
\end{proof}

\begin{remark}[Limiting case of pillar obstacles]
 In the case where the obstacles are fixed pillars, \textit{i.e.}, the case where $\kappa\to \infty$, then the uniform distribution of agents and pillars is always a stable solution. The effect of this limiting case is that the equations for the agents on $(\rho_g, \Omega)$ become decoupled from the obstacles' density $\rho_f=\rho_A$, which is just constant (take the formal limit $\kappa\to \infty$ on the continuum equations \eqref{eqmacro}). Therefore, there is an abrupt behavioural change between static obstacles and obstacles that can move a bit (anchored at a fixed point \emph{via} a very stiff spring). This shows that, in this particular set up, the fact that the agents are able to modify their environment is {crucial} for interesting patterns to emerge.
\end{remark}

\paragraph{The role of the parameters.}
From the instability condition \eqref{eq:instability_condition}, we observe that the main drivers of the formation of instabilities are: the shape of the agent-obstacle repulsion potential $\phi$, the obstacle-spring stiffness $\kappa$, and the agent-agent repulsion intensity $\mu$.  
High agent-agent repulsion - high values of $\mu$ - has a stabilising effect while high agent-obstacle repulsion - high values of $C_\phi$ - has a destabilising effect, and vice\ls{-}versa. Also, high values of the spring constant $\kappa$ have a stabilising effect and small values have the opposite effect. 

From Cor. \ref{cor:stability}, in the case when $\phi$ is given in \eqref{potentials} the ratio given by 
\begin{equation} \label{eq:bifurcation_parameter}
b_p:=\frac{\mu\kappa}{C^2_\phi c'_0}
\end{equation}
 is the single value that acts as a bifurcation parameter.
However, the obstacle-agent repulsion radius $\tau$ plays a role in determining the size of the patterns (see Fig. \ref{simus_macro}). 
Also, from the instability condition \eqref{eq:instability_condition} and corollary \ref{cor:stability}, for typical shapes of the potential~$\phi$, we expect to have stability for small and large values of the {wave vector} $k$ but instabilities can appear at intermediate values whenever $c_0>\mu \kappa$ (where $c_0$ is given in \eqref{eq:c_0}).

\bigskip 
\noindent
The rest of this section is devoted to the proof of Th. \ref{th:linear_stability}. 
\begin{proof}[Proof of Th. \ref{th:linear_stability}]
We start by linearising the continuum equations \eqref{eqmacro} around $(\rho_0,\Omega_0)$ by expanding the solution using a small perturbation parameter $\beta$
\begin{equation} \label{eq:perturbation}
\rho_g = \rho_0 + \beta \rho_1+\mathcal{O}(\beta^2), \qquad \Om = \Om_0 + \beta \Om_1+\mathcal{O}(\beta^2), \qquad |\Omega|=1.
\end{equation}

\noindent Dropping the higher order terms, we obtain the linearised system (where the over-script bar notation is defined in Eq. \eqref{eq:bar_symbol}):
\begin{subequations} \label{eq:linSys}
\begin{align}
&\partial_t \rho_1 + d_1\Om_0\cdot \nabla \rho_1 + d_1\rho_0 \nabla\cdot \Omega_1=\bar\mu \rho_0 \Delta \rho_1+\rho_0\bar\lambda \left(\Delta^2 \bar{\bar \rho}_1-\gamma \Delta^2\pd_t \bar{\bar \rho}_1\right), \label{eq:linRho}\\
&\rho_0 \pd_t\Om_1 + \rho_0 d_2 \left(\Om_0\cdot \nabla\right)\Om_1+d_3 P_{\Om_0^\perp}\nabla \rho_1=\gamma_s\rho_0 P_{\Om_0^\perp}\Delta \Om_1,\label{eq:linOm}\\
&\Om_0\cdot \Om_1 =0, \label{eq:linNorm}
\end{align}
\end{subequations}
 where $\Delta^2$ is the bi-Laplacian, \textit{i.e.}, $\Delta^2 \rho= \Delta(\Delta \rho)$ and $P_{\Omega_0^\perp}$ is the orthogonal projection on $\Omega_0^\perp$. 
 Note also that $\bar\mu=\mu/\zeta$, $\gamma=\eta/\kappa$ and $\bar \lambda=\rho_A/(\kappa \zeta)$ (we assume $\rho_A=1$).

We now define the functions $F,G: \R_+\to \R$ by:
\begin{align}
\label{eq:FG}
&F(z):=z^2\frac{\rho_0}{\zeta}\left(\frac{1}{\kappa}z^2 (\hat \phi_z)^2-\mu\right),\\
&G(z):=1+\rho_0\frac{\eta}{\kappa^2 \zeta} z^4  (\hat\phi_z)^2>0,\nonumber
\end{align}
and given $k\in R^2$, we denote by  $k_0$, $k_1$ the quantities
\begin{equation}
\label{eq:abbrev}
k_0=(k\cdot\Omega_0), \quad k_1 = (k\cdot \Omega_0^\perp),
\end{equation}\newline
where $\Omega_0^\perp$ is the image of $\Omega_0$ by the rotation of angle $\pi/2$.
Th. \ref{th:linear_stability} is then a direct consequence of the following proposition. 
\begin{proposition}
\label{prop:linearStability}
System \eqref{eq:linSys} allows for non-trivial plane wave solutions, \textit{i.e.} solutions of the form
\begin{equation} \label{eq:wave-form-observables}
\rho_1(x,t)=\tilde\rho e^{i k\cdot x+\alpha t},\qquad \Om_1(x,t)=\tilde\Om e^{i k\cdot x+\alpha t},
\end{equation}
where $k\in\R^2$ is the wave vector, $\alpha\in\mathbb{C}$, $\tilde \rho\in\mathbb{C}$ and $\tilde\Om\in\mathbb{C}^2$, and $(\tilde \rho, \tilde \Omega)\neq (0,0)$ if and only if $\alpha$ and $k$ fulfil the following dispersion relations:\newline
\textbf{Case A:} $k \parallel \Omega_0$\newline
\textit{Option 1:} $\tilde \rho\neq 0$, $\tilde \Om=0,$
\begin{align}
\alpha= \alpha_1(k):=  -i\frac{d_1k_0}{G(|k_0|)} + \frac{F(|k_0|)}{G(|k_0|)}. \label{alphaA}
\end{align}
\textit{Option 2:} $\tilde \rho= 0$, $\tilde \Om\neq 0$.
\begin{align} \label{eq:stable_case}
\alpha= \alpha_2(k): = -id_2k_0 - |k|^2\gamma_s.
\end{align}
\textbf{Case B:} $k \nparallel \Omega_0$. Then, $\alpha$ is a root of the following polynomial of degree 2:
\begin{align}
\alpha^2 G &+ \alpha \left[G|k|^2 \gamma_s-F + ik_0 \left(Gd_2+d_1\right)\right] \label{alphaB}\\
&+d_1(\rho_0d_3k_1^2-d_2k_0^2)-|k|^2\gamma_s F + i\left(d_1k_0|k|^2\gamma_s-d_2k_0F\right)=0.\nonumber
\end{align}

The real parts of $\alpha$ are negative if and only if the following holds:
\begin{align}
\label{eq:RH1}
G(k)|k|^2 \gamma_s-F(k)>0,
\end{align}
and
\begin{align}
\label{eq:RH2}
H(k):=\left[G(k)|k|^2 \gamma_s-F(k)\right]^2&  d_1d_3 k_1^2\\- \gamma_sF(k)|k|^2&\left[(d_1-d_2G(k))^2k_0^2+\left[G(k)|k|^2 \gamma_s-F(k)\right]^2\right]>0.\nonumber
\end{align}

\end{proposition}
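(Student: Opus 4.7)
The plan is to substitute the plane-wave ansatz \eqref{eq:wave-form-observables} into the linearized system \eqref{eq:linSys} and reduce it to a small algebraic system. The one subtle ingredient is the convolution: by rotational invariance of $\phi$, $\bar{\bar\rho}_1 = \phi \ast \phi \ast \rho_1$ picks up a factor $(\hat\phi_{|k|})^2$ under Fourier substitution, so the fourth-order terms $\Delta^2\bar{\bar\rho}_1$ and $\partial_t\Delta^2\bar{\bar\rho}_1$ become $|k|^4(\hat\phi_{|k|})^2\tilde\rho$ and $\alpha|k|^4(\hat\phi_{|k|})^2\tilde\rho$. Collecting the coefficients in \eqref{eq:linRho} produces exactly the functions $F$ and $G$ of \eqref{eq:FG} and yields a single scalar equation relating $\tilde\rho$ and $k\cdot\tilde\Om$. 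The constraint \eqref{eq:linNorm} forces $\tilde\Om = \tilde\omega\,\Om_0^\perp$ for some scalar $\tilde\omega\in\mathbb{C}$, whence $k\cdot\tilde\Om = k_1\tilde\omega$; projecting \eqref{eq:linOm} onto $\Om_0^\perp$ then gives a companion equation, completing a $2\times 2$ linear system in $(\tilde\rho,\tilde\omega)$.

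The two dispersion cases follow from the structure of this $2\times 2$ system. In Case A ($k_1 = 0$, with $|k|=|k_0|$) the equations decouple: non-trivial solutions with $\tilde\omega = 0$ give \eqref{alphaA}, and solutions with $\tilde\rho = 0$ give \eqref{eq:stable_case}. In Case B ($k_1\neq 0$), non-trivial $(\tilde\rho,\tilde\omega)$ exist iff the determinant vanishes; expanding it and dividing by $\rho_0$ produces \eqref{alphaB}. For the stability criterion, I would write the quadratic as $G\alpha^2 + B\alpha + C = 0$ with $G>0$ and complex coefficients $B, C$ read off from \eqref{alphaB}, compute $\alpha_\pm$ from the quadratic formula, and translate $\mathrm{Re}(\alpha_\pm) < 0$ into two real inequalities. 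The first, $\mathrm{Re}(B) > 0$, is precisely \eqref{eq:RH1}; the second arises from requiring the real part of $\sqrt{B^2 - 4GC}$ to be smaller than $\mathrm{Re}(B)$, which after squaring produces a Routh-Hurwitz-type discriminant inequality in $B_r, B_i, C_r, C_i$ and $G$.

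The main obstacle is translating this second inequality into the compact form \eqref{eq:RH2}. Setting $A = G|k|^2\gamma_s - F$, $X = \gamma_s|k|^2$ and $E = d_1 X - d_2 F$, the key identity $E = d_2 A + (d_1 - d_2 G) X$ causes the $A^2$-contributions in the cross-terms to cancel: the combination $-d_1 d_2 A^2 + (G d_2 + d_1) A E - G E^2$ collapses to $(d_1 - d_2 G)^2 X (A - GX) = -(d_1 - d_2 G)^2 \gamma_s |k|^2 F$. Combined with the direct contribution from $\mathrm{Re}(B)^2 \mathrm{Re}(C)$, this recombination produces the bracketed expression $(d_1 - d_2 G)^2 k_0^2 + A^2$ multiplying $\gamma_s F |k|^2$ in $H(k)$, together with the leading term $A^2 d_1 d_3 k_1^2$. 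The remaining steps are routine bookkeeping.
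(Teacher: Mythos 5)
Your proposal is correct and follows essentially the same route as the paper: plane-wave substitution, reduction to a $2\times 2$ homogeneous system in $(\tilde\rho,\tilde\omega)$ via $\tilde\Omega=\tilde\omega\,\Omega_0^\perp$, a vanishing-determinant condition that splits into Cases A and B according to whether $k_1=0$, and a Routh--Hurwitz-type criterion for the signs of $\mathrm{Re}(\alpha)$. The only difference is that you derive the complex-coefficient Routh--Hurwitz condition explicitly from the quadratic formula and carry out the algebra reducing it to the compact form of $H(k)$ (your key cancellation identity checks out), whereas the paper simply cites the criterion and states the result.
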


\begin{proof}[Proof of Prop. \ref{prop:linearStability}]
Substituting the plane wave ansatz into the equation yields
\begin{subequations} \label{eq:stab}
\begin{align}
&\tilde \rho \alpha + i \tilde \rho d_1 \left(\Om_0\cdot k\right) + i \rho_0 d_1 \left(\tilde \Om \cdot k\right)=-|k|^2\bar\mu\rho_0\tilde \rho + |k|^4 \bar\lambda \rho_0\tilde\rho(\hat\phi_k)^2\left(1-\gamma\alpha\right), \label{eq:stabRho}\\
&\rho_0\alpha\tilde \Om+i\rho_0d_2\tilde \Om \left(\Om_0\cdot k\right)+i\tilde\rho d_3 P_{\Om_0^\perp}k=-|k|^2\rho_0\gamma_s\tilde \Om,\label{eq:stabOm}\\
&\Om_0\cdot \tilde \Om =0. \label{eq:stabNorm}
\end{align}
\end{subequations}
or (if $\tilde \Omega = \omega \Omega^\perp_0$)
\begin{align*}
    & (G(|k|)\alpha - F(|k|)+i d_1 k_0) \tilde \rho + i \rho_0 d_1 k_1 \omega =0,\\
    &  i d_3 k_1 \tilde \rho + \rho_0 (\alpha + i d_2 k_0 + |k|^2 \gamma_s) \omega =0.
\end{align*}
This is a homogeneous linear system in $(\tilde \rho, \omega)$ which has a non-trivial solution if and only if the determinant of the system is $0$, \textit{i.e.}:
\begin{equation} \label{eq:condition_existence}
   \left( G(|k|) \alpha-F(|k|)+ i d_1 k_0 \right) \left(\alpha + i d_2 k_0 + |k|^2 \gamma_s \right) +  d_1 d_3 k_1^2=0.
\end{equation}
If $k_1=0$, there are two roots corresponding to either bracket being zero. This leads to \eqref{alphaA} or \eqref{eq:stable_case}. If $k_1\neq 0$, we can recast \eqref{eq:condition_existence} in \eqref{alphaB}.

To determine the sign of the real part of $\alpha$, we use the Routh-Hurwitz criterion for polynomials with complex coefficients \cite{gantmacher2005applications,morris1962routh}. In our case the Routh-Hurwitz criterion states that the $\text{Re}(\alpha)<0$ for all solutions $\alpha$ if and only if expressions \eqref{eq:RH1} and \eqref{eq:RH2} hold.

\end{proof}

With Prop. \ref{prop:linearStability} we conclude the proof of Th. \ref{th:linear_stability} as follows. 
Suppose \eqref{eq:instability_condition} holds and let $z_0>0$ be such that $z_0^2 \hat \phi^2_{z_0}>\mu \kappa$. Let $k= z_0 \Omega_0$. Then $k_0=z_0$ and $k_1=0$. So $F(|k|)=F(z_0)>0$ and $\alpha=\alpha_1(k)$ is such that $\mbox{Re }(\alpha)>0$. Hence, the linearized system is unstable.

Suppose now \eqref{eq:instability_condition} does not hold, \textit{i.e.}, $z^2 \hat \phi^2_z<\mu \kappa$, for all $z\in \R_+$. 
Then, $F(|k|)<0$, for all $k\in \R^2$. It results that Re$(\alpha_1(|k|))<0$, Re$(\alpha_2(|k|))<0$. Furthermore \eqref{eq:RH1} and \eqref{eq:RH2} are obviously satisfied for all $k\in \R^2$. Hence the system is stable.
\end{proof}


\subsection{Numerical validation of the linear stability analysis}\label{macstab}

In this section we compare the pattern predictions given by the linear stability analysis with the results obtained from numerical simulations.  This way we check that the linear stability analysis truly captures pattern formation, \textit{i.e.}, that nonlinear effects are of second order and most of the patterns characteristics are captured by linear effects.

\vspace{0.5cm}

\subsubsection{Predictions from the theoretical analysis and qualitative agreement with the macroscopic simulations}
We start by giving insights on the size and shape of the expected patterns based on the theoretical predictions offered by the stability analysis. To this aim, we consider perturbations introduced in the stability analysis (see Prop. \ref{prop:linearStability}), around the homogeneous density $\rho_0 = 1$ and in constant direction $\Omega_0 \in \mathbb{S}^1$ . As we are particularly interested in characterizing the patterns corresponding to clusters or bands, we will focus on the theoretical values for wave vectors parallel to $\Omega_0$ and parallel to $\Omega_0^\perp$ :
$$
k^{th}_\parallel = \underset{k \parallel \Omega_0 }{\mbox{argmax }} \mbox{Re}( \tilde \alpha(k) ), \quad k^{th}_\perp = \underset{k \parallel \Omega_0^\perp }{\mbox{argmax }} \mbox{Re}( \alpha(k) ), 
$$ 
where $\tilde \alpha(k)$ corresponds to case A  (Eq. \eqref{alphaA}), $\alpha(k)$ corresponds to case B (larger root of Eq. \eqref{alphaB}, computed numerically) and the symbol `Re' indicates the real part. With these wave vectors maximizing the real part of $\alpha$, we define the quantities:
$$
S^{th}_1 = \frac{{2\pi}}{|k^{th}_\parallel|}, \quad S^{th}_2 = \frac{{2\pi}}{|k^{th}_\perp|}.
$$
These quantities give the size of the expected patterns in each direction. We will also compute the maximal growth rates of the perturbations in these two directions:
$$
\alpha^{\parallel}_{max} = \underset{k \parallel \Omega_0 }{\mbox{max }} \mbox{Re}( \tilde \alpha(k) ), \quad \alpha^{\perp}_{max} = \underset{k \parallel \Omega_0^\perp }{\mbox{argmax }} \mbox{Re}( \alpha(k) ). 
$$
Equipped with these quantifiers, we now study the influence of the model parameters on the expected pattern shapes and sizes. As predicted by the stability analysis, patterns can be expected if the bifurcation parameter $b_p$ (Eq. \eqref{eq:bifurcation_parameter}) is below 1. We fix $C_\phi = 5$ and use $\phi$ as in Eq. \eqref{potentials} giving $c_0 \approx 5.6$ independent on $\tau$ as shown in the proof of corollary 1 (definition of $c_0$ in Eq. \ref{eq:c_0}). 
We vary $b_p$ by changing the values of the agent-agent repulsion intensity $\mu$ and aim to study the influence of the friction constant $\zeta$, the obstacle spring stiffness $\kappa$ and the agent-obstacle repulsion distance $\tau$. For each subsection, we compare qualitatively these predictions based on the linear stability analysis with simulations of the macroscopic model presented in Fig. \ref{simus_macro}

\vspace{0.5cm}

\paragraph{Influence of the friction constant $\zeta$.}
First we fix $\kappa = 1000$ and $\tau = 0.15$, and show in Fig. \ref{figstabzeta} the values of $\alpha^{\parallel}_{max}$ and $\alpha^{\perp}_{max}$ (left panel) and of $S^{th}_1$ and $S^{th}_2$ (right panel), as functions of the bifurcation parameter $b_p$ and for different values of the agent friction constant $\zeta$: $\zeta = 0.1$ (blue curves), $\zeta = 0.5$ (red curves), $\zeta = 1$ (yellow curves). 
\begin{figure}
 \includegraphics[scale = 0.35]{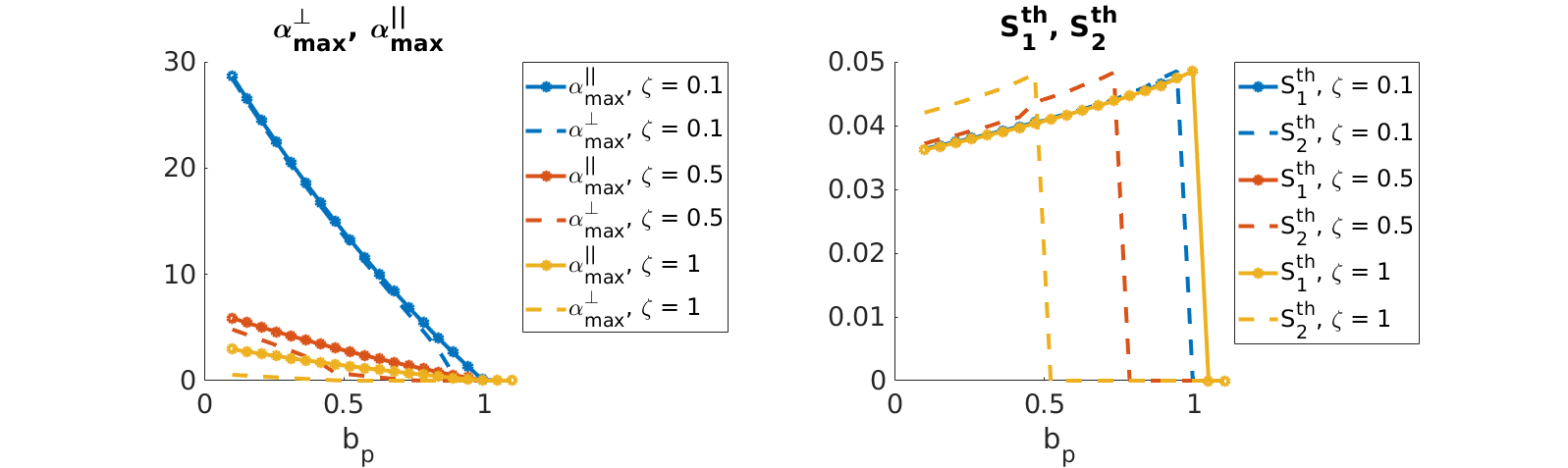}
\caption{Prediction of the linear stability analysis. Left: values of the maximal growth rate of the plane wave perturbations in the direction of $\Omega_0$ (continuous lines) and in the orthogonal direction $\Omega_0^{\perp}$ (dashed lines) as functions of the bifurcation parameter $b_p$, for different values of the agent friction $\zeta$: $\zeta = 0.1$ (blue curves), $\zeta = 0.5$ (red curves), $\zeta = 1$ (yellow curves). Right: same representation for the size of the perturbations in the two directions $S_1^{th}$ and $S_2^{th}$. }
    \label{figstabzeta}
\end{figure}
One can first observe in Fig. \ref{figstabzeta} (left panel) that we indeed recover the critical value 1 of the bifurcation parameter, below which perturbations grow ($Re(\alpha)>0$) and after which they are damped, independently on the value of $\zeta$. This shows that $b_p$ is indeed a relevant bifurcation parameter. Moreover, one can observe that perturbations grow faster for smaller values of the friction constant $\zeta$ (compare the blue and red curves in the left panel). From the right panel of Fig. \ref{figstabzeta}, we note first that the size of the clusters increases when increasing the bifurcation parameter (here, by increasing the agent-agent repulsion $\mu$). These are expected results as stronger agent repulsion leads to higher pressure in the agent population, leading to larger clusters. Secondly, we observe that the size of the patterns is independent on the friction constant $\zeta$, but the parameter zone in which patterns are of travelling band type (i.e $S_1^{th}>0$ and $S_2^{th} = 0$) is larger for larger values of $\zeta$ -compare the yellow and blue dashed curves in the right panel-. Thus, high friction substrates seem to favor the formation of travelling bands compared to low friction environments, provided the bifurcation parameter is large enough (large obstacle spring stiffness and/or large agent-agent repulsion compared to agent-obstacle repulsion).

\paragraph{Qualitative comparison with the macroscopic simulations.}
The influence of the agent friction $\zeta$ for $\tau=0.15$ and $\kappa = 1000$ can be observed in the macroscopic simulations presented in Fig. \ref{simus_macro} panel (C), comparing the rows together (from bottom to top for increasing values of $\zeta$). We first note that in the simulations of the three panels of Fig. \ref{simus_macro}, the values considered for the product $\mu\kappa$ were always the same, \textit{i.e.},
$$\mu\kappa \in \{0.02, 0.2, 0.5, 1, 4, 6 \},$$
and the value of $C_\phi = 5$ was kept constant, corresponding to the following values for the bifurcation parameter:
$$b_p \in \{0.0036,    0.0357,    0.0893,    0.1785,    0.7142,    1.0713\}.$$
We then observe that in each panel of Fig. \ref{simus_macro}, patterns are indeed observed in the first 5 columns of the tables while the last column displays a homogeneous distribution of agents. This validates the fact that patterns are observed only when the bifurcation parameter $b_p$ is below 1.

Moreover, focusing on the last panel (for which $\kappa = 1000$), we recover most of the observations predicted by the stability analysis: (a) the pattern size increases when increasing the bifurcation parameter (increasing $\mu$: compare simulations from left to right in Fig. \ref{simus_macro} panel (C)), (b) the zone of parameters showing travelling bands increases when increasing the agent friction $\zeta$ (compare bottom to top rows of panel (C)). Therefore, we obtain a very good qualitative agreement between the simulations of the macro model and the tendencies predicted by the stability analysis as function of $\zeta$.

\vspace{0.5cm}

\paragraph{Influence of the obstacle spring stiffness $\kappa$.}
Here we adopt the same representation as in the previous paragraph, but fixing the agent friction constant $\zeta = 0.5$ and playing on the obstacle spring stiffness $\kappa$ (we keep the agent-obstacle distance $\tau = 0.15$). Fig. \ref{figstabkappa} shows the values of $\alpha^{\parallel}_{max}$ and $\alpha^{\perp}_{max}$ (left panel) and of $S^{th}_1$ and $S^{th}_2$ (right panel), as functions of the bifurcation parameter $b_p$ and for $\kappa = 10$ (blue curves), $\kappa = 100$ (red curves) and $\kappa = 1000$ (yellow curves).
\begin{figure}
 \includegraphics[scale = 0.35]{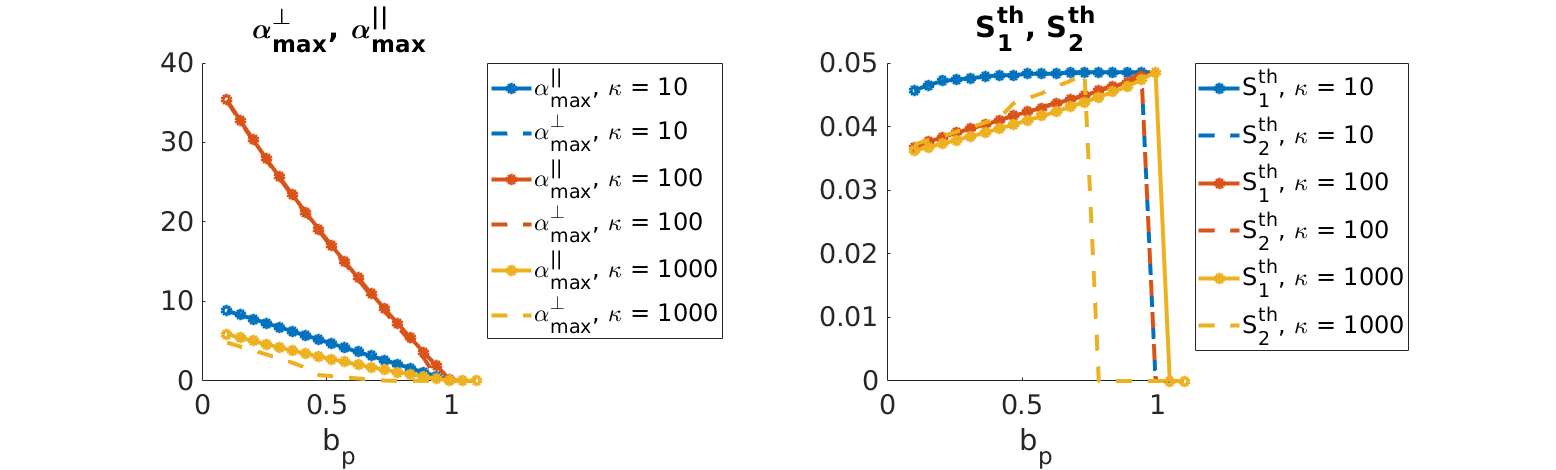}
\caption{Left: values of the maximal growth rate of the plane wave perturbations in the direction of $\Omega_0$ (continuous lines) and in the orthogonal direction $\Omega_0^{\perp}$ (dashed lines) as functions of the bifurcation parameter $b_p$, for different values of the obstacle spring stiffness $\kappa$: $\kappa = 10$ (blue curves), $\kappa = 100$ (red curves), $\kappa = 1000$ (yellow curves). Right: same representation for the size of the perturbations in the two directions $S_1^{th}$ and $S_2^{th}$. }
    \label{figstabkappa}
\end{figure}
From Fig. \ref{figstabkappa} (right), we can observe a similar evolution of the pattern size playing on the obstacle spring stiffness as when changing the friction constant $\zeta$: increasing the obstacle spring stiffness $\kappa$ slightly increases the zone of parameters favoring the formation of bands of agents (compare yellow and red dashed curves in the right panel). One can particularly note (blue curve of Fig. \ref{figstabkappa} (right)) that environments composed of loose obstacles ($\kappa = 10$) will only promote agent clusters the size of which is independent of the value of the bifurcation parameter. 
Finally we note from Fig. \ref{figstabkappa} (left) that the growth rate of perturbations does not evolve monotonically with the spring stiffness $\kappa$: faster perturbations are observed for $\kappa = 100$ compared to $\kappa = 10$ or $\kappa = 1000$ (compare red with blue and yellow curves in the left panel).

\paragraph{Qualitative comparison with the macroscopic simulations.}
The influence of the obstacle spring stiffness $\kappa$ for $\tau=0.15$ and $\zeta = 0.5$ can be observed in the macroscopic simulations presented in Fig. \ref{simus_macro}, comparing the second rows (starting from the bottom) in each panel (panel (A) for $\kappa = 10$, panel (B) for $\kappa = 100$, panel (C) for $\kappa = 1000$). 

Again, we obtain a very good agreement with the theoretical predictions: (a) the pattern sizes increase when increasing the bifurcation parameter (by increasing $\mu$: compare simulations from left to right in each panel), (b) the increase in pattern size as function of $\mu$ seems less important for $\kappa = 10$ (panel (A)) than for larger obstacle spring stiffness (panels (B) and (C)), and (c) travelling bands are only observed for $\kappa = 1000$ (panel (C)). 
\vspace{0.5cm}

\textbf{Influence of the agent-obstacle repulsion distance $\tau$}

Finally we aim to document the role of the agent-obstacle repulsion distance $\tau$. We adopt the same methodology as in the two previous paragraphs: we fix $\zeta = 0.5$ and $\kappa = 1000$ and show in Fig. \ref{figstabtau} the values of $\alpha^{\parallel}_{max}$ and $\alpha^{\perp}_{max}$ (left panel) and of $S^{th}_1$ and $S^{th}_2$ (right panel), as functions of the bifurcation parameter $b_p$ and for $\tau = 0.15$ (blue curves), $\tau = 0.2$ (red curves) and $\tau = 0.3$ (yellow curves). 
\begin{figure}
 \includegraphics[scale = 0.35]{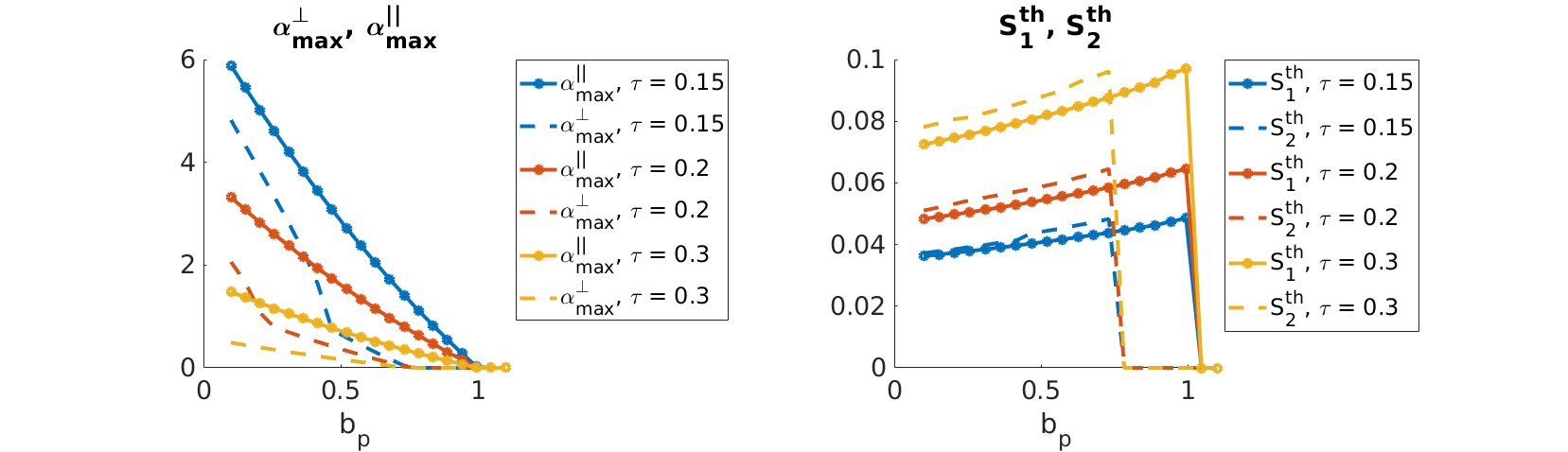}
\caption{Left: values of the maximal growth rate of the plane wave perturbations in the direction of $\Omega_0$ (continuous lines) and in the orthogonal direction $\Omega_0^{\perp}$ (dashed lines) as functions of the bifurcation parameter $b_p$, for different values of the agent-obstacle distance $\tau$: $\tau = 0.15$ (blue curves), $\tau = 0.2$ (red curves), $\tau = 0.3$ (yellow curves). Right: same representation for the size of the perturbations in the two directions $S_1^{th}$ and $S_2^{th}$. }
    \label{figstabtau}
\end{figure}
We first observe that increasing the value of $\tau$ slows down the growth of the perturbation modes (compare blue red and yellow curves of Fig. \ref{figstabtau} (left)). Moreover, as predicted by the stability analysis, the critical value of $\mu$ for which patterns appear does not depend on $\tau$: patterns are once again only observed as long as the bifurcation parameter $b_p$ does not exceed the value 1. Secondly, Fig. \ref{figstabtau} (right) shows that the agent-obstacle distance $\tau$ has a strong impact on the size of the clusters: larger $\tau$ leads to larger agent clusters (compare for instance yellow and blue curves in Fig. \ref{figstabtau} (right)), and agent-obstacle repulsion distance does not impact the shape of the patterns (clusters or bands types).  

As the simulations of Fig. \ref{simus_macro} have been generated only for $\tau = 0.15$, we are not able at this point to compare qualitatively the predictions of the stability analysis with the simulations of the macroscopic model as functions of this parameter. 
We will however assess the influence of $\tau$ via a quantitative comparison between the model and the theory in the next section. 

Altogether, these results show that agent-agent repulsion favors the spreading of the agents while agent-obstacle repulsion tend\ls{s} to aggregate the agents (and consequently cluster\ls{s} obstacles together). 
Travelling bands of agents seem to be favored in low friction environments composed of stiff obstacles, and the size of agent clusters seem to be controlled primarily by the agent-obstacle distance and the bifurcation parameter (ratio between the agent-agent repulsion intensity and the agent-obstacle repulsion intensity).

\subsubsection{Quantitative agreement between the macroscopic simulations and the stability analysis}
Here we provide a quantitative assessment of the pattern sizes computed numerically on the simulations of the macroscopic model and the ones predicted by the stability analysis. To this aim, we first compute numerically the pattern sizes using the 2D Discrete Fourier transform of the agent density at equilibrium $\hat{F} [\rho_g]=\hat{F} [\rho_g] (k)$, and extract the frequency of the two maximal modes $k_\parallel, k_\perp\in \R^2$ aligned in the direction of $\Omega_0$ and $\Omega_0^\perp$\ls{,} respectively:
$$
k_\parallel = \underset{k \parallel \Omega_0 }{\mbox{argmax}} \; |\hat{F} [\rho_g] (k)|, \quad k_\perp = \underset{k \parallel \Omega_0^\perp}{\mbox{argmax}} \; |\hat{F} [\rho_g] (k)|,
$$
where $ |\cdot| $ is the modulus of a complex number. 
Then, the theoretical quantifiers $S^{th}_1$ and $S^{th}_2$ will be compared with
$$S_1= \frac{2 \pi}{|k_\parallel|}, \quad S_2 = \frac{2 \pi}{|k_\perp|}.$$

In Fig. \ref{stability_macro}, we show the values of $S_1$ and $S_2$ (dotted curves) and $S_1^{th}, S_2^{th}$ (plain curves), for three different values of the obstacle spring stiffness $\kappa = 10$ (panel  (A)), $\kappa = 100$ (panel  (B)) and $\kappa = 1000$ (panel  (C)), and three different value of $\tau$: $\tau = 0.15$ (blue curves), $\tau=0.2$ (orange curves) and $\tau=0.3$ (yellow curves). Note that here $\zeta = 0.5$ so that theoretical predictions correspond to Fig. \ref{figstabtau}. Simulations are completed for $\Omega_0 = \left( \cos \frac{\pi}{4}, \;\sin \frac{\pi}{4} \right)$
and $\rho_0 = 1$.

As one can observe, we obtain a fairly good agreement between the values computed on the numerical solution and the ones predicted by the linear stability analysis as presented in Fig. \ref{stability_macro}. As predicted, the size of the repeating patterns increases as $\tau$ increases, (compare blue, red and yellow curves), and as the agent-agent repulsion intensity $\mu$ increases while staying below the critical threshold $\mu^*$ (above which the homogeneous steady-state profile is stable), corresponding to $b_p = 1$. 
For $\kappa=1000$ (panel  (C) of Fig. \ref{stability_macro}), we also recover the regime of travelling bands predicted for $\mu = 4 \cdot 10^{-3}$ and here $S_1 >0$ and $S_2=0$, \textit{i.e.}, patterns (the travelling bands) are only in the direction $\Omega_0$. 

\begin{figure}
\centering
\includegraphics[scale = 0.65]{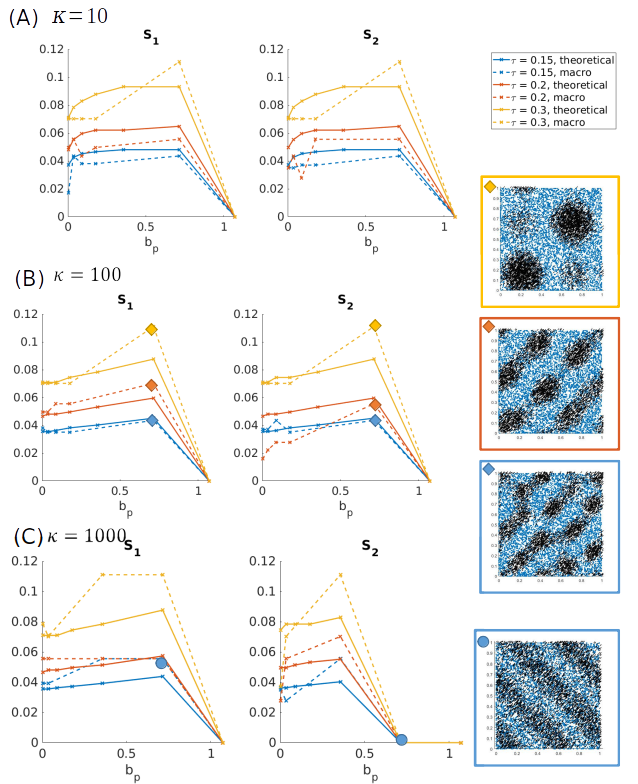}
\caption{Values of the maximal eigenmodes of the Fourier transform of the numerical solution (dotted curves) and predicted by the stability analysis (plain curves)  in direction $\Omega_0= (\cos\pi/4, \sin\pi/4)$ (left figures) and $\Omega_0^\perp$ (right coloured frames). 
Three different values of the obstacle spring stiffness are considered: $\kappa = 10$ (panel  (A)), $\kappa = 100$ (panel  (B)) and $\kappa = 1000$ (panel  (C)), and three different agent-obstacle repulsion force distances $\tau = 0.15$ (blue curves), $\tau=0.2$ (orange curves) and $\tau=0.3$ (yellow curves). Right column: examples of simulations for parameters reported on the graphs: simulations with diamond symbol match panel (B) (yellow frame: clusters; red and blue frame: trails) and the simulation with the circle symbol matches panel (C) (travelling bands). }
\label{stability_macro}
\end{figure}

\section{Quantitative assessment of the continuum model}\label{quantmicmac}

In this section, we aim to compare quantitatively the continuum  and discrete models.  As our goal is to compare continuous density profiles (continuum  model) with clouds of points representing individual positions (discrete model), a method to quantify the `proximity' between these two different types of solutions has to be devised. A first natural choice would be to use the quantifiers defined in  the previous section, \textit{i.e.} to compute the maximal eigenmode of the Fourier transform of the agent distributions from the discrete simulations. 
This would enable to construct a space-independent quantifier which could give an insight into the main structures of the discrete model.  
However, as one can observe in Fig. \ref{simus_micro}, the agent and obstacle structures that emerge from the discrete dynamics are not necessarily regularly spaced in the domain, which makes the use of the Fourier transform imprecise for the discrete simulations. 
In the following section, we propose a new method to compare discrete point clouds and continuum densities which does not require some spatial regularity of the patterns. 

\subsection{Methodology to compare discrete and continuum simulations}
\label{sec:methodology}

In Table \ref{tab:methodology}, we summarize the steps of the method we propose to compare discrete and continuum simulations. After generating two simulations (one with the continuum model and one with the discrete dynamics, step 1), we first aim to find the optimal Cartesian mesh on which (i) we interpolate the continuum solution and (ii) we compute the density of the point clouds using a Particle In Cell (PIC) method (step 2, Section \ref{sec:discretization_continuum}). 
At the end of this step, both solutions (continuum and discrete) are projected on the same Cartesian mesh.
In step 3 (section \ref{sec:compare_discretizations}), we then compute a Wasserstein-type distance based on the histograms of the two density distributions. 

\begin{table}[]
    \centering
    \begin{tabular}{p{0.2\textwidth} p{0.3\textwidth}p{0.05\textwidth}p{0.3\textwidth}}
        \textbf{STEP 1} & Simulation of the continuum equations \eqref{eqmacro} & & Simulation of the particle dynamics \eqref{eq:IBM_obstacle}-\eqref{eq:IBM_swim_positions} \\
        & \begin{center}$\Downarrow$ \end{center} & &  \begin{center}$\Downarrow$ \end{center} \\
         \textbf{STEP 2} & Discretization of the output density on a grid $\Pi^{\Delta x}_U$ using the PIC method: $\rho^{\Delta x}_{mac}$ & & Approximate the particle density on the grip $\Pi^{\Delta x}_U$ using the PIC method: $\rho^{\Delta x}_{mic}$\\
         &$\rightarrow$ \emph{Compute the optimal grid size $\Delta x$ using the $\ell^2$ distance (Sec. \ref{sec:discretization_continuum})} &&\\
         &\begin{center}$\searrow$ \end{center}&&\begin{center}$\swarrow$ \end{center} \\ 
       \textbf{STEP 3} &  \multicolumn{3}{c}{ Compare $\rho^{\Delta x}_{mic}$ and $\rho^{\Delta x}_{mac}$ with the distance $W(\rho^{\Delta x}_{mic}, \rho^{\Delta x}_{mac})$}\\
       &  \multicolumn{3}{c}{$\rightarrow$ \emph{Computed with the EMD method} (Sec. \ref{sec:compare_discretizations})}
    \end{tabular}
    \caption{Diagram of the methodology used to compare the simulations for the continuum equations \eqref{eqmacro} and the simulation of the discrete dynamics \eqref{eq:IBM_obstacle}-\eqref{eq:IBM_swim_positions}. PIC: Particle-In-Cell; EMD: Earth Movers Distance; $W$: Wasserstein distance.}
    \label{tab:methodology}
\end{table}

\subsubsection{Discretization of the particle density}
\label{sec:discretization_continuum}

A natural choice for comparing point clouds and continuum densities is to choose a Cartesian grid for both models, and compute the density of the individual agents using for instance a Particle-In-Cell (PIC) method \cite{degond2015}. However, the choice of grid points spacing is critical, as it depends on the profile of the distribution as well as on the number of particles present in the computational domain:  highly clustered agent distributions require fine meshes to enable to capture the characteristics of the small and concentrated agent clusters, while more homogeneous agent distributions require coarser grids to allow the capture of larger patterns (see Fig. \ref{procedure}). 
To be efficient, the grid spacing must, therefore, account for the characteristic size of the continuum  structures that can be captured with a finite number of individual points. As we want to compare a continuum model with a discrete one, we will use the continuum  simulations as a reference. 
Our goal here is to find the optimal Cartesian mesh on which a continuum density $\rho^{\Delta x}(x,t)$ would be best represented by a cloud of $N$ points, for $N$ given.
Note that the continuum density $\rho^{\Delta x}(x,t)$ is itself already discretized on a Cartesian mesh with spacing $\Delta x$, because it corresponds to a solution of the discretized continuum model.

Given a continuum density profile $\rho^{\Delta x}(x,t)$  - discretized on a Cartesian mesh $\Omega_{\Delta x} \subset \Omega$ with grid spacing $\Delta x= \frac{1}{N_x}$ in each direction - we first throw $N$ individual points $(y_1, \hdots, y_N) \in \Omega$ according to the distribution $\rho^{\Delta x}(x,t)$. 
We now denote by $\rho_{PIC}^{h}(y_1,\hdots, y_N)$ the density of the individual points $(y_1, \hdots, y_N)$ computed on a Cartesian mesh of spacing $h>0$ using a PIC method, and $\Pi^{\Delta x} \big(\rho_{PIC}^h \big)$ its linear interpolation on the initial mesh $\Omega_{\Delta x}$. 
We aim at finding the optimal grid spacing $h$ minimizing the $L^2$ distance between the initial continuum density $\rho^{\Delta x}(x,t)$ and its approximation by $N$ individual points:
$$
\tilde{h} = \underset{h}{\mbox{argmin}}\quad  || \rho^{\Delta x} - \Pi^{\Delta x} \big(\rho_{PIC}^h(y_1,\hdots,y_N) \big)||_{\ell^2(\Omega_{\Delta x})},
$$
\noindent where $||.||_{\ell^2(\Omega_{h})}$ denotes the discrete $l^2$ norm on a Cartesian mesh $\Omega_{h}$: $$||\rho^{h}||_{\ell^2(\Omega_h)} = h^2\sum_{i=1}^{N_x}\sum_{j=1}^{N_x} |\rho^{h}(x_i,y_j)|^2$$

The optimal $\tilde{h}$ is computed numerically. It therefore corresponds to the best grid spacing one can hope for approximating a density $\rho^{\Delta x}$ with a set of $N$ points. We therefore will use this quantity to compare a simulation of the continuum model with one of the discrete model performed with $N$ agents. The discretized macroscopic density will be denoted by $\rho^{\tilde h}_{mac} = \rho_{PIC}^{\tilde{h}}(y_1,\hdots,y_N)$, and the approximation from the discrete particle simulation will be denoted by $\rho^{\tilde{h}}_{mic}$ (computed via the PIC method on a grid with spacing $\tilde{h}$). In the following section, we describe how to compare $\rho^{\tilde{h}}_{mac}$ with $\rho^{\tilde{h}}_{mic}$.

\begin{figure}
\centering
\includegraphics[scale = 0.7]{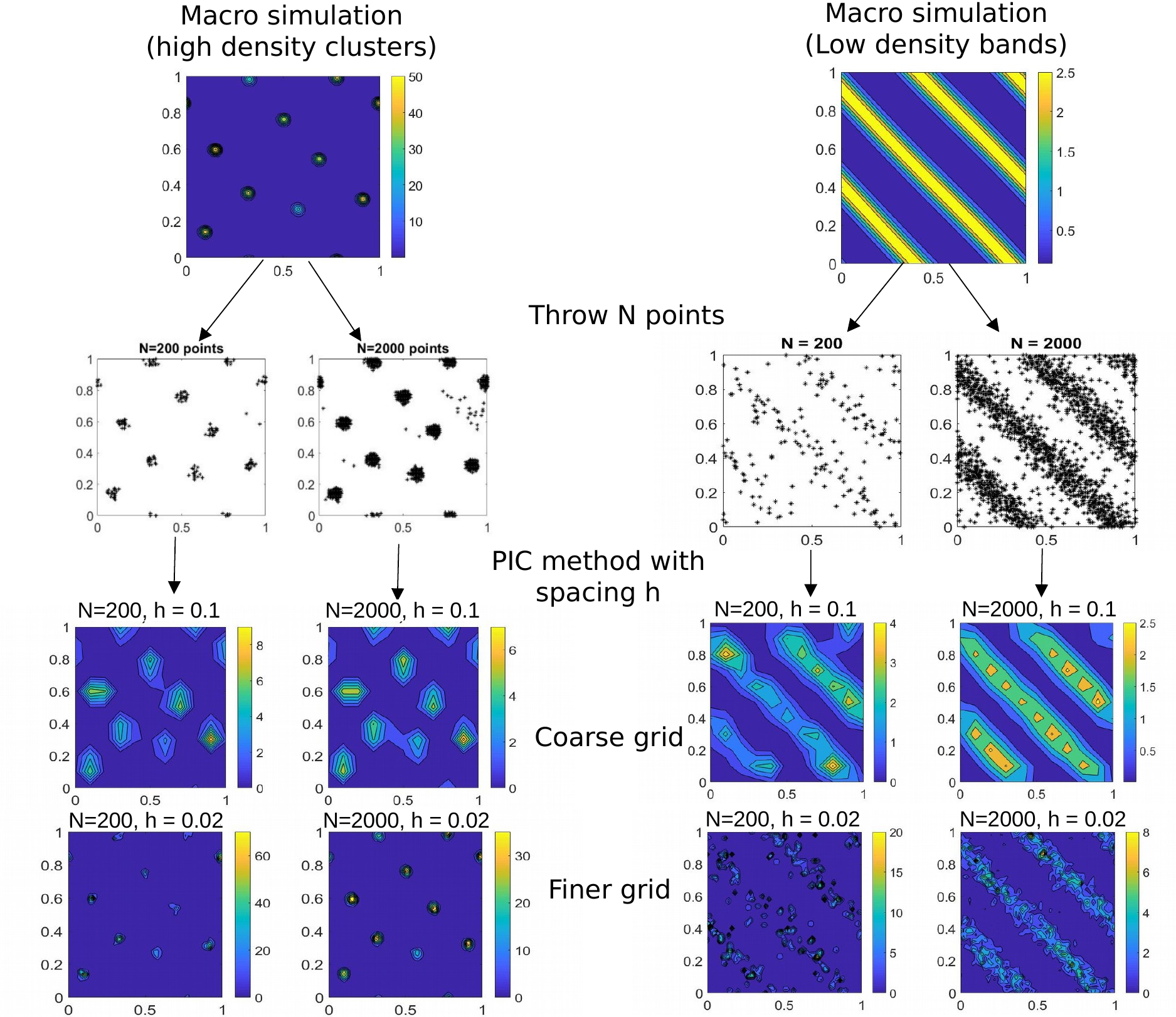}
\caption{Two examples of the procedure for choosing the optimal grid for the PIC method of the discrete simulations, starting from a continuum simulation with high density aggregates (left column) or low density bands (right column). The first step (first to second rows) consists in distributing $N$ points according to the continuum density distributions (left: for $N=200$, right for $N=2000$ points), and the second step (third and fourth rows) computes the approximated density using a PIC method with different spacing from the point distributions (third row: using a coarse grid with spacing $h = 0.1$, fourth row: using a finer grid $h = 0.02$). 
As one can see in Fig. \ref{procedure}, while the number of points to throw to approximate the continuum density does not play a major role for high density clustered distributions, it becomes critical for approximating more homogeneous distributions (compare left and right columns). Moreover, high density clusters require the use of a fine enough grid to correctly recover the initial distribution (compare third and fourth rows in the left column), while smoother distributions are better approximated using a large number of agents and coarse grids (third row of the left columns). These first results highlight the necessity for adapting the numerical grid used to compute the density of agents from the discrete model if one hopes to have a consistent quantifier to compare with the continuum model.  \label{procedure}}
\end{figure}

\subsubsection{Comparing discretized and discrete dynamics}
\label{sec:compare_discretizations}
The comparison between the discretized and the discrete dynamics will be done in several steps:
\begin{enumerate}[label=Step \arabic*),leftmargin=1.5cm]
    \item \label{wass} \textit{Choosing the right distance to compare the micro- and macro- simulations:} 
    we want to construct a quantifier enabling to compute the distance between the two distributions $\rho^{\tilde h}_{mac}$ and $\rho^{\tilde h}_{mic}$ described in the previous section (solutions of the continuum and discrete models projected on a Cartesian mesh with spacing $\tilde h$). The first natural choice would be to use the discrete $L^2$ norm as both quantities are defined on the same meshes. However, we need a quantifier independent on space translations, as there is no reason for the patterns of the discrete model to match exactly the locations of those of the continuum model at a given time. For example if the discrete and continuum simulations produce band patterns with same width and speed but not at the same positions, we still want to consider that the two solutions are very close to each other. Therefore, we propose here to use a Wasserstein-like distance. \\
    Inspired from \cite{rubner2000}, we choose to work with the Earth Movers Distance (EMD). The EMD is based on the minimal cost that must be paid to transform one distribution into the other  and relies on the solution to a transportation problem issued from linear optimization. As solving the transport problem in 2 dimensions is very costly, we 'compress'/approximate the density distributions using their signatures (histograms).
    \item \textit{Construction of the signatures of the distributions:} given a density profile on a grid containing $N_h = \frac{1}{h}$ points in each direction $(\rho_{ij})$, $i=1\hdots N_h, j=1 \hdots N_h$, the signature of $\rho$,  $P[\rho]=\{(p_1,\omega_{1}), \hdots, (p_m,\omega_{m}))\}$ is defined as:
    \begin{equation}
        p_k = \frac{k M }{n_b}, \quad \omega_{k} = \sum_{i=1}^{N_h} \sum_{j=1}^{N_h} \mathds{1} _{[p_{k-1}, p_{k}]} \big(\rho_{ij}\big), \quad k=1 \hdots n_b,
        \label{signatures}
    \end{equation}
    \noindent where $M = ||\rho||_\infty = \max_{i,j} \rho_{ij}$ and  the number of bins $n_b$ has been chosen using the Freeman Diaconis rule, for which the bin width corresponds to $2 \frac{IQR}{n^{3/2}}$, where $IQR$ is the interquartile range of the data and $n$ is the number of observations (in our case the number of grid points, $n = \frac{1}{h^2}$). We give in Fig \ref{signature} a visual representation of computing the signature of a toy distribution with 4 bins and in Fig. \ref{histo} an example of the histograms of two simulations of the continuum model. Note that when computed on density distributions, the points $p_k$ in each cluster correspond to local density values and the corresponding weights $\omega_k$ are the number of grid (spatial) points in which the density is comprised between the values $p_{k-1}$ and $p_{k}$. 
    \item \textit{Definition of the EMD between two signatures:} following the lines of \cite{rubner2000}, we apply the following linear programming problem: Let $P=\{(p_1,\omega_{1}), \hdots, (p_m,\omega_{m}))\}$ and   $Q=\{(q_1,v_{1}), \hdots, (q_n,v_{n}))\}$ be two signatures with $m$ and $n$ clusters represented by their representatives $p_k,q_{\ell}$ and their respective weights $\omega_k,v_{\ell}$ for $k = 1 \hdots m, \ell=1 \hdots n$. We want to find a flow $F=(f_{k\ell})$ minimizing the overall cost:
    $$
    W(P,Q,F) = \sum_{k=1}^m\sum_{\ell=1}^n d_{k \ell}f_{k \ell},
    $$
    where $d_{k \ell}$ is the ground distance matrix between clusters $p_k$ and $q_\ell$:
    $$
    d_{k \ell} = |p_k - q_\ell|.
    $$
    The minimization is made under the following set of constraints:
    \begin{align}
        &f_{ij} \geq 0, \quad 1\leq i \leq m, \; 1\leq j\leq n, \label{C1}\\
        &\sum_{j=1}^n f_{ij} \leq \omega_{pi}, \quad 1\leq i \leq m \label{C2}\\
        &\sum_{i=1}^m f_{ij} \leq \omega_{qj}, \quad 1\leq j \leq n \label{C3}\\
        &  \sum_{i=1}^m\sum_{j=1}^n f_{ij} = min(\sum_{i=1}^m \omega_{pi}, \sum_{i=1}^m \omega_{qj}).\label{C4}
    \end{align}
    \begin{figure}
        \centering
        \includegraphics[scale = 0.6]{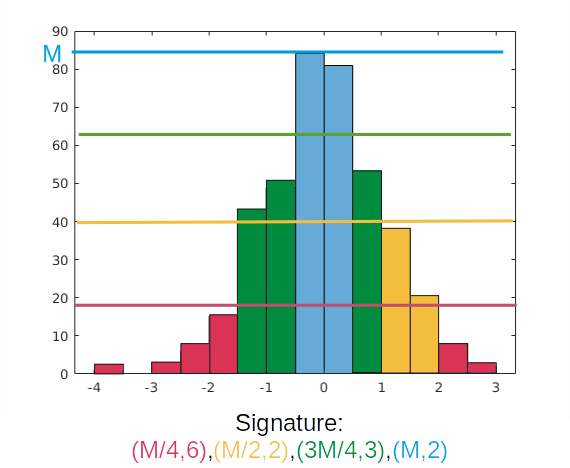}
        \caption{Example of the signature of a distribution with 4 bins, $M=85$ and $N_x = 13$. The different colors represent the different compartments of the signature. \label{signature}}
    \end{figure}

    \noindent If we look at the signatures $P$ and $Q$ as a set of goods at given locations (represented by $p$ and $q$) each with a given amount (represented by the weights $\omega$ and $v$), the EMD can be seen as a transportation problem consisting in finding the least expensive flow of goods from the suppliers to the consumers, where the cost of transporting a single unit of goods is given. Then, constraint \eqref{C1} expresses that 'supplies' can be transported from $P$ to $Q$ only, while constraints \eqref{C2}, \eqref{C3} limits the amounts of supplies that can be given by $P$ to $Q$ and that can be received from $Q$ to $P$\ls{,} respectively. 
    The final constraint \eqref{C4} expresses the fact that the total amount of mass transported must be optimal. Once this transportation problem is solved, the EMD between signatures $P$ and $Q$, $EMD(P,Q)$ is then defined as:
    $$
    EMD(P,Q) = \frac{\sum_{i=1}^m\sum_{j=1}^n d_{ij}f_{ij}}{\sum_{i=1}^m\sum_{j=1}^n f_{ij}}.
    $$

    \begin{figure}
        \centering
        \includegraphics[scale = 0.7]{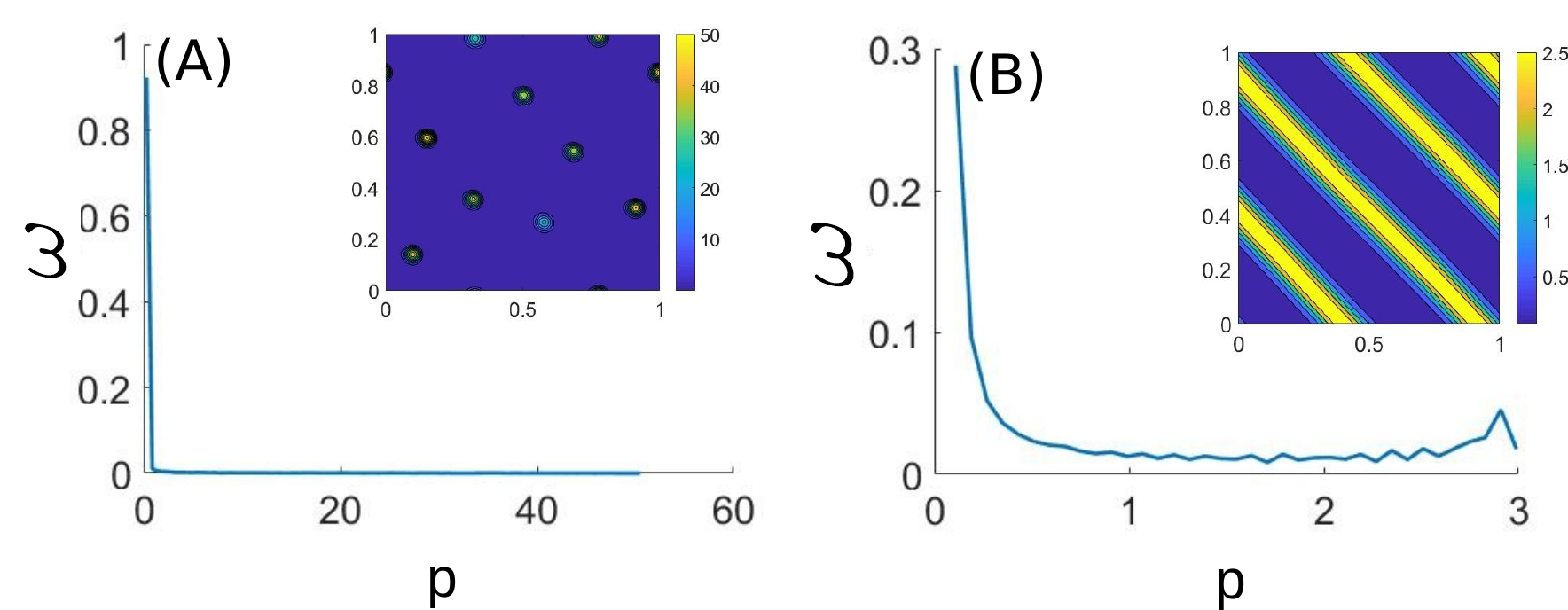}
        \caption{Histograms for simulations of Fig. \ref{procedure} as defined in Eq. \eqref{signatures}. \label{histo}}
    \end{figure}

    Rubner et al. proved in \cite{rubner2000} that when the ground distance is a metric and the total weights of the two signatures are equal, the EMD is a true metric. Therefore, by considering the Euclidean distance as ground distance we can use the EMD as a valid dissimilarity measure between signatures. However, as two different density distributions may have the same signature, the EMD with \eqref{signatures} as signatures is a pseudo-metric. However, as shown in Fig. \ref{histo}, the histograms between band like patterns and clustered state are very different distributions, making this pseudo metric suitable for measuring the dissimilarity between pattern types. Moreover, we check carefully in the next paragraph the validity of the EMD when it can be compared to the classical $L^2$ distance.
    \item \textit{Validation of the pseudometric EMD.}
    In order to check the validity of the pseudometric constructed in this section, we  aim to compare the efficiency of the EMD in cases where it can be compared to the classical $L^2$ distance. More specifically, we use it to measure the dissimilarity between the density profile of the continuum  simulation (high density clustered simulation of Fig. \ref{procedure} left column) and its approximation by a cloud of $N$ points reconstructed on a grid, using the procedure described in the previous section. 
    We show these dissimilarity measures in Fig. \ref{error}, as a function of the number of grid points for the PIC method ($N_{PIC}$, horizontal axis of Figs. \ref{error}) and different numbers $N$ of discrete particle (different curves), using the EMD distance based on histograms (left panel) or the $L^2$ distance based on point values (right panel). 
    As one can observe in Fig. \ref{error}, the EMD and the $L^2$ norm are in good accordance. 
    As previously observed in Sec. \ref{sec:discretization_continuum}, both metrics show that for each number of particles used to approximate the continuum  density distribution, there exists an optimal number of grid points for the PIC method which minimizes the distance between the initial density and its approximation by particles. 
    As expected, this optimal value increases as the number of particles increases, suggesting that using a larger number of agents allows the use of finer grids which enables us to better capture the fine structures of the continuum density distribution. 
    Moreover, this figure shows that the Wasserstein distance based on the EMD between density signatures seems to be a valid tool to compare density distributions. 

    \begin{figure}
        \centering
        \includegraphics[scale = 0.6]{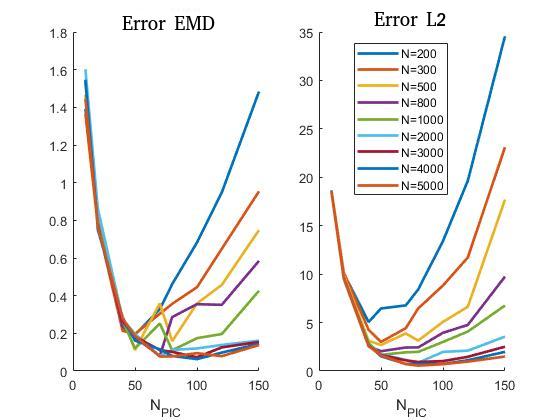}
        \caption{Error between the density profile of the continuum  simulation (high density clustered simulation of Fig. \ref{procedure} left column) and its approximations using the procedure described in Sec. \ref{sec:discretization_continuum}, as a function of the number of grid points for the PIC method $N_{PIC}$ (horizontal axis) and different number $N$ of discrete particles (see insert for correspondance between curve color and $N$), using the EMD distance based on histograms (left {panel}) or the L2 distance based on point values (right {panel}) \label{error}}
    \end{figure}

\end{enumerate}

In the next section we present  the numerical comparison between the discrete and continuum  models. 

\vspace{1cm}

\subsection{Results} \label{sec:results}

We aim to compare quantitatively the steady-states of the discrete and continuum  models in different regimes of the parameters, and study the influence of the number of agents for the discrete model $N$ as well as the scaling parameter $\epsilon$. 
We recall that the assumptions for the derivation of the continuum equations are given in Sec. \ref{sec:assumptions}. 
In particular, some of the parameters are scaled by a factor $\eps \ll 1$ in the following way (denoting by a tilde the values used for discrete simulations):
\begin{equation} \label{eq:rescaling_parameters}
\tilde{r_R} = \epsilon r_R, \quad \tilde{r_A} = \sqrt{\epsilon} r_A, \quad \tilde{d_s} = \frac{d_s}{\epsilon}, \quad \tilde{\nu} = \frac{\nu}{\epsilon}.
\end{equation}
For all  simulations, we consider the same number of agents and obstacles and set $M=N$, and we fix the values of $C_\phi = 5$ (leading to $c_0 = 5.6$) and $\zeta = 0.5$. 
For each set of parameters, we use the method previously described in Sec. \ref{sec:methodology} to compare discrete and continuum simulations.

\subsubsection{Mild obstacle spring stiffness}

In Fig. \ref{micmack100}, we show the simulations obtained for mild obstacle spring stiffness $\kappa = 100$. 
The left panel is obtained for $\mu = 2.10^{-3}$ (corresponding to a bifurcation parameter $b_p \approx 0.036$), and the right panel is for $\mu = 4.10^{-2}$ (corresponding to $b_p \approx 0.7$, close to the stability threshold 1). Top figures show the EMD between the continuum  and discrete solutions as a function of $\epsilon$, for different number of agents used for the discrete simulations $N$: $N=500$ (blue curve), $N=1000$ (red curve), $N=3000$ (yellow curve) and $N=5000$ (purple curves). The corresponding simulations are shown below in tables: for each, the left column shows the simulations of the continuum model, and the next columns are simulations of the discrete model for different values of $\epsilon$: $\epsilon = 0.1$ (second column), $\epsilon = 0.5$ (third column), $\epsilon = 0.8$ (fourth column), $\epsilon = 1$ (last column). The different rows of the tables correspond to different number of agents for the discrete simulations as well as for the discretization of the continuum density (from top to bottom: $N=500$, $N=1000$, $N=3000$, $N=5000$). 

\begin{figure}
\includegraphics[scale = 0.52]{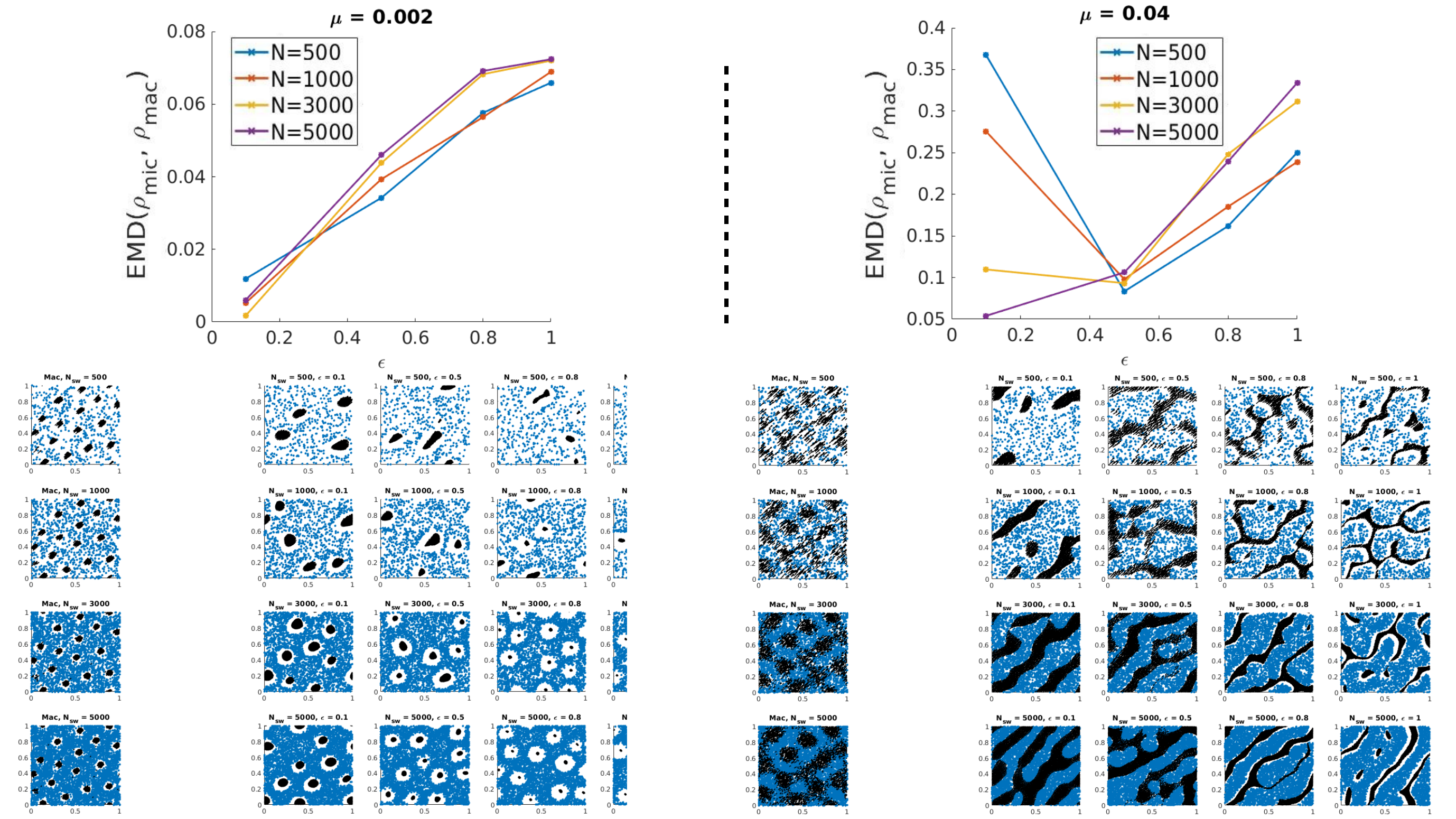}
\caption{Comparison between the discrete and continuum  simulations for mild obstacle spring stiffness $\kappa  = 100$ and agent friction $\zeta = 0.5$. Left figures: for weak agent agent repulsion $\mu = 2 \; 10^{-3}$, right figures, for $\mu = 4 \; 10^{-2}$. Top figures: EMD between the approximated continuum density and the discrete one as a function of $\epsilon$ for different values of the number of agents $N$: $N=500$ (blue curve), $N=1000$ (red curve) and $N=3000$ (yellow curve) and $N=5000$ (purple curve).  Bottom tables: simulations of the continuum model (left column), and of the discrete one for different values of $\epsilon$: $\epsilon = 0.1$ (second column), $\epsilon = 0.5$ (third column), $\epsilon = 0.8$ (fourth column) $\epsilon = 1$ (last column). The different rows correspond to different number of agents for the discrete simulations as well as for the discretization of the continuum density (from top to bottom: $N=500$, $N=1000$, $N=3000$, $N=5000$).\label{micmack100}}
\end{figure}

Fig. \ref{micmack100} suggests that  the discrete and continuum models are in quite good agreement in the case of week agent-agent repulsion ($b_p \ll 1$, left panel), where both models are able to reproduce agent clusters, while their correspondence is more tenuous for stronger agent-agent repulsion ($b_p$ close to the instability threshold, right panel), where the discrete dynamics seems to produce more trail-like patterns than the continuum model. 
For both regimes however, we can observe a significant improvement of the discrete-continuum correspondence as $\epsilon$ decreases, suggesting that the continuum model becomes a good approximation of the discrete dynamics as $\epsilon$ goes to zero. Indeed, for weak agent-agent repulsion (left panel), we observe that decreasing $\epsilon$ is accompanied by an increase in the cluster sizes and a decrease of the distance between the boundary of the clusters and the obstacles, getting closer to the cluster types observed with the macroscopic dynamics. For stronger agent-agent repulsion (right panel), the clusters thicken as $\epsilon$ decreases and get closer to the continuum structures.

These observations are confirmed by the measurements of the EMD between the discrete and continuum agent distributions (top plots of Fig. \ref{micmack100}). 
Indeed, one notes in the left panel that the distance between the two distributions decreases as the scaling parameter $\epsilon$ decreases, independently on the number of agents. Moreover, the top plot on the right panel shows that the discrete-continuum distance is larger for stronger agent-agent repulsion ($b_p$ close to 1) compared to the case where $b_p \ll 1$ (left panel). From the right figure, we also observe a strong dependency of the discrete-continuum distance as a function of the number of agents used in the discrete model. When the agent-agent repulsion is strong (or equivalently when $b_p$ is close to 1), it becomes crucial to use a large number of individuals for the discrete simulations, while the number of agents does not seem to significantly impact the discrete-continuum agreement in regimes favoring the apparition of small and dense clusters (small agent-agent repulsion or equivalently small $b_p$).  

These first observations tend to suggest that the choice of the number of agents in the discrete setting seem to depend both on the choice of $\epsilon$ and on the regime of parameters. In order to give more insights on the influence of $N$ and $b_p$ on the discrete-continuum match, we plot in Fig. \ref{micmack100error} the EMD between the discrete and continuum models as a function of $b_p$ (by changing the value of $\mu$ for fixed $\kappa=100$), having fixed $\epsilon = 0.1$ and for different $N$:$N=500$ (blue curve), $N=1000$ (red curve) and $N=3000$ (yellow curve) and $N=5000$ (purple curve). 

\begin{figure}
\centering 
\includegraphics[scale = 0.4]{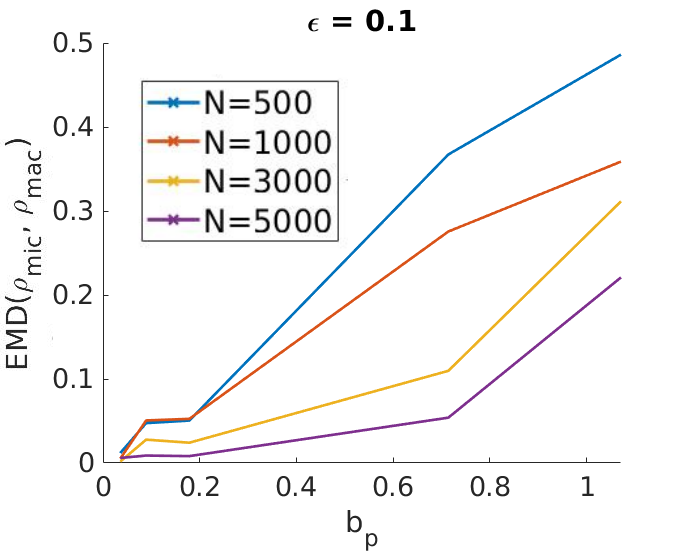}
\caption{EMD between the approximated continuum density and the discrete one as a function of $b_p$ for $\kappa=100$, $\zeta = 0.5$ and $\epsilon=0.1$, and for different values of the number of agents $N$: $N=500$ (blue curve), $N=1000$ (red curve) and $N=3000$ (yellow curve) and $N=5000$ (purple curve).  \label{micmack100error}}
\end{figure}

As one can see in Fig. \ref{micmack100error}, the discrete-continuum distance increases with $b_p$ independently on the number of agents $N$, suggesting indeed that the discrete and continuum models are closer far from the instability threshold. As the agent-agent repulsion increases (increasing values of $b_p$), the number of agents used in the discrete simulations has increasing influence on the match between the discrete and continuum simulations. These results suggest that large agent clusters with low density are better captured by a large number of agents. 

\subsubsection{Strong obstacle spring stiffness}
Here we aim to study the discrete-continuum agreement for strong obstacle spring stiffness $\kappa = 1000$. In Fig. \ref{micmack1000}, the left panel is obtained for $\mu = 2.10^{-4}$ (corresponding to a bifurcation parameter $b_p \approx 0.036$), and the right panel is for $\mu = 4.10^{-3}$ (corresponding to $b_p \approx 0.7$, close to the stability threshold 1). Top figures show the EMD between the continuum  and discrete solutions as a function of $\epsilon$, for different number of agents used for the discrete simulations $N$: $N=500$ (blue curve), $N=1000$ (red curve), $N=3000$ (yellow curve) and $N=5000$ (purple curves). As in the previous section, the corresponding simulations are shown below in tables: for each, the left column shows the simulations of the continuum model, and the next columns are simulations of the discrete model for different values of $\epsilon$: $\epsilon = 0.05$ (second column), $\epsilon = 0.1$ (third column), $\epsilon = 0.5$ (fourth column) $\epsilon = 0.8$ (fifth column) and $\epsilon = 1$ (last column). As before, the different rows of the tables correspond to different number of agents for the discrete simulations as well as for the discretization of the continuum density (from top to bottom: $N=500$, $N=1000$, $N=3000$, $N=5000$). 

\begin{figure}
\includegraphics[scale = 0.52]{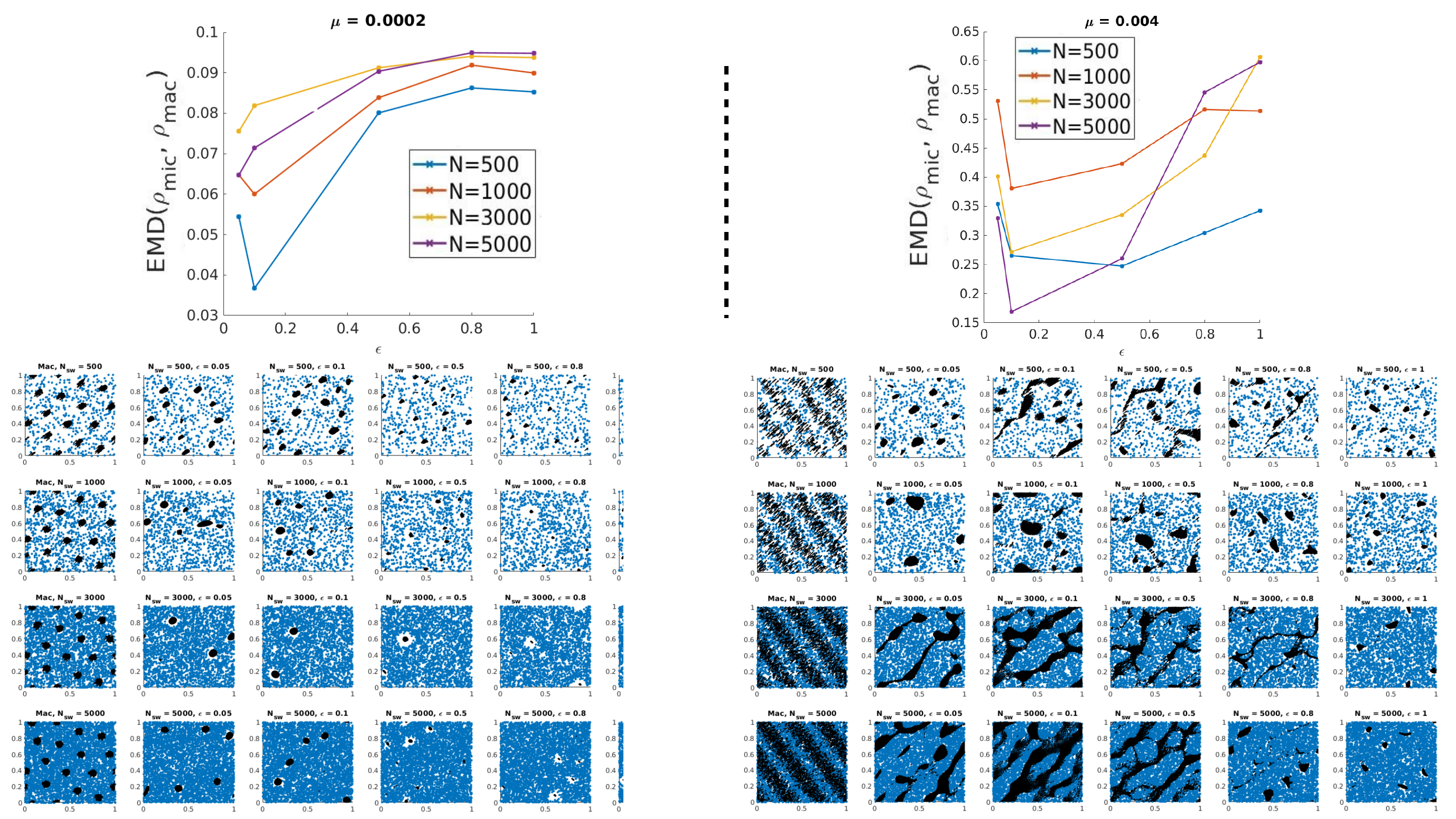}
\caption{Comparison between the discrete and continuum  simulations for strong obstacle spring stiffness $\kappa  = 1000$ and agent friction $\zeta = 0.5$. Left figures: for weak agent agent repulsion $\mu = 2 \; 10^{-4}$, right figures, for $\mu = 4 \; 10^{-3}$. Top figures: EMD between the approximated continuum density and the discrete one as function of $\epsilon$ for different values of the number of agents $N$: $N=500$ (blue curve), $N=1000$ (red curve) and $N=3000$ (yellow curve) and $N=5000$ (purple curve).  Bottom tables: simulations of the continuum model (left column), and of the discrete one for different values of $\epsilon$: $\epsilon = 0.05$ (second column), $\epsilon = 0.1$ (third column), $\epsilon = 0.5$ (fourth column) $\epsilon = 0.8$ (fifth column) and $\epsilon = 1$ (last column). The different rows correspond to different number of agents for the discrete simulations as well as for the discretization of the continuum density (from top to bottom: $N=500$, $N=1000$, $N=3000$, $N=5000$) \label{micmack1000}}
\end{figure}

For strong obstacle spring stiffness $\kappa = 1000$, we again observe that the discrete and continuous models are in good agreement far from the instability threshold (left panel), where both models reproduce clusters, while the agreement between the two models worsen for stronger agent-agent repulsion (right panel), where the discrete system fails to reproduce the travelling bands patterns observed with the continuum model. Again, the discrete-continuum agreement improves as $\epsilon$ decreases: for low agent-agent repulsion the discrete pattern sizes converge to the ones of the continuum model as $\epsilon$ decreases (from right to left in the left panel), and for strong agent-agent repulsion (last rows of the right panel), decreasing $\epsilon$ induces a phase transition between clustered states and trail-like agent patterns, closer to the formation of bands.

It is noteworthy that for small agent-agent-repulsion ($b_p \ll 1$, left figure), the agreement between the discrete and continuum dynamics seem to be better when using less agents in the discrete model, independently on the value of $\epsilon$ (compare purple and blue curves on the left panel), while close to the instability threshold ($b_p$ close to 1, right figure) the choice of $N$ seems to be related to $\epsilon$: the discrete-continuum error decreases when using larger $N$ for small $\epsilon$, smaller $N$ for larger $\epsilon$. 

\subsubsection{Summary of observations}
We conclude that the continuum equations are a good approximation of the discrete dynamics in the limit of small rescaling parameter $\eps$, as long as the agent-agent repulsion $\mu$ is small enough (\textit{i.e} in a parameter regime far from the instability threshold, $b_p \ll 1$). 
On the contrary, the trend is less apparent when $\mu$ gets closer to the instability threshold $\mu^*$ (corresponding to $b_p = 1$). 
In particular, when $\mu \approx \mu_*$, the rescaling factor $\eps$ can act as a phase transition parameter between different types of patterns (right panel, Fig. \ref{micmack1000}). This phase transition is due to the fact that the instability condition is given by \eqref{eq:instability_condition}.
Indeed, the presence of $\mu$ in this formula hints to the fact that at the discrete level the agent-agent repulsion potential $\psi$ plays a key role in determining the patterns that emerge. 
Therefore, it is no wonder that by rescaling the value of agent-agent repulsion radius $\tilde r_R=\eps r_R$ (and therefore changing the value of $\psi$) the shape of the patterns also changes. 
However, the smaller the $\mu$ the less relevant the role of $\psi$, thus the predictions of the continuum simulations become more robust.

There is also another important factor to take into account: the continuum dynamics just gives averaged behaviour of the discrete dynamics. 
If there is a wide variability in the discrete dynamics, due to its intrinsic stochasticity, then the average behaviour will not be able to represent well particular realizations of the discrete dynamics. 
It seems that closer to the boundary of the instability region ($\mu\approx \mu_*$) this variability is larger.

\section{Discussion}
In this article we have investigated a model for collective dynamics in an environment filled with obstacles that are tethered to a fixed point \emph{via} a spring. 
The model was first introduced in \cite{Aceves2020}. 
In particular, the paper has presented the following novelties:
\begin{inparaenum}[(i)]
    \item phase diagram of the continuum equations in dimension 2;
    \item a linear stability analysis of constant solutions;
    \item method to discriminate between different types of patterns that has been used to compare quantitatively the relation between discrete and continuum simulations;
    \item and, finally, a more extensive phase diagram of the discrete dynamics that has allowed to identify two new types of patterns with respect to \cite{Aceves2020} (honey comb structures and pinned cluster states).
\end{inparaenum}

\medskip
The continuum description captures well the behavior of the system when it is comprised of a large number of agents and obstacles, and involves huge computational savings compared with the simulation of the discrete system.
Comparing discrete and continuum simulations is in general not straightforward. We have proposed a method to compare the two types of solutions to investigate in which parameter regime they are in good correspondence. This parameter regime includes the assumptions made for the derivation of the continuum equations in Section \ref{sec:assumptions}: the spring stiffness must be large $\kappa\gg 1$, the number of agents and obstacles must be large $N,M\gg 1$, the scaling parameter $\eps\ll 1$ (see \eqref{eq:rescaling_parameters} for the rescaled parameters) must be small. However, we require one more condition to have a good correspondence between discrete and continuum dynamics: the agent-agent repulsion intensity $\mu$ must be much smaller than the critical value $\mu_*$, which is at the threshold of the instability condition \eqref{eq:instability_condition}. 
For values closer to $\mu_*$ the intrinsic variability of the system is too large to be described just with the averaged behaviour that captures the continuum equations.

\medskip
This work has also showcased the impact of the environment in pattern formation in collective dynamics. The phase diagrams of both discrete and continuum dynamics show that the feedback interactions between agents and obstacles give rise to a rich variety of patterns. 
In particular, we have observed that trails, travelling bands, moving clusters, uniform configurations and other in-between patterns emerge. 
The fact that agents can modify their environment by moving the obstacles is fundamental to this pattern emergence. 
This can be clearly seen in the linear stability analysis where the instability condition \eqref{eq:instability_condition} depends crucially on the agent-obstacle repulsion force $\phi$\ which is the only interaction force between agents and obstacles, and on the spring stiffness $\kappa$ which indicates the degree of mobility of the obstacles around their tethered positions. 

\medskip
As a prospective work, we would like to use the models investigated here to study the impact of the environment in collective dynamics under different set up. One of these set up is collective motion in a complex fluid. To investigate this, the idea is to couple the current model with a fluid model. Then the environment in which collective motion takes place will be the combination of the fluid with the obstacles. The idea of representing a complex fluid in this manner is similar to other existing models in the literature, such as the Oldroyd-B model that describes the visco-elasticity of fluids filled with spring dumbbells \cite{oldroyd1950formulation}. 
The coupling of the current discrete model with a fluid model will require a new derivation of the continuum equations and a new linear stability analysis to understand how the presence of the fluid impacts the dynamics and pattern formation. \\
Another extension of this work will investigate the impact in collective dynamics of an environment filled with a different type of obstacles (i.e., obstacles of a different nature than the ones considered in this work). For example, one can consider solid obstacles that are movable but that are not tethered or that have a particular shape (like elongated fibers).

\section*{Acknowledgements}
The authors wish to thank Prof. E. Keaveny, Department of Mathematics, Imperial College London, for stimulating discussions. PD holds a visiting professor association with the Department of Mathematics, Imperial College London, UK. Part of this research was done when PD was affliated with this Department and supported by the Engineering and Physical Sciences Research Council (EPSRC) under grant no. EP/P013651/1. SMA was supported by the Vienna Science  and  Technology  Fund  (WWTF)  with  a  Vienna  Research  Groups  for  Young  Investigators, grant VRG17-014. The research by SMA was partially supported by the Austrian Science Fund (FWF) through the  project  F65.

\bibliographystyle{abbrv}
\bibliography{biblio}
\clearpage

		\appendix
\section{Supplementary material: Videos of IBM simulations \label{AppendixA}}

In the following paragraphs, we give some details on the videos of the IBM simulations available online as supplementary material. Each video is composed of two simulations representative of the different types of patterns shown in Fig. \ref{simus_micro} and highlighted by a red cross. In each movie, agents are represented by black arrows and obstacles by colored points. The colors indicate the distance of the obstacles to their tethered points (from blue (close to their attachment site) to red (stretched springs)).
Otherwise stated, the parameters used for the simulations are the ones indicated in table \ref{table_param}.

\textit{S1 - Trails}

Link: https://doi.org/10.6084/m9.figshare.19615599.v2

In this video, two simulations are shown : The left movie is obtained for weak obstacle spring stiffness $\kappa = 10$, agent friction $\zeta = 2$ and  agent-agent repulsion $\mu = 0.02$ and the right movie is obtained for mild obstacle spring stiffness $\kappa = 100$, agent friction $\zeta = 1$ and agent-agent repulsion $\mu = 0.04$. Both simulations show the spontaneous formation of trails of agents in a more or less deformable field of obstacle. For weak obstacle spring stiffness (left movie), agents easily repulse the obstacles as they move, creating large trails empty of obstacles that can merge or evolve over time. For larger obstacle spring stiffness (right movie), the agents also organize in trails that push the obstacles, creating tunnels with strong walls, more robust over time. It is noteworthy that trail structures are only observed for weak or mild obstacle spring stiffness.

\textit{S2 - Travelling bands - strong obstacles}

Link: https://doi.org/10.6084/m9.figshare.19615884.v3

Here, both simulations feature strong obstacle spring stiffness $\kappa = 1000$ and agent friction $\zeta = 2$. The left movie is obtained for weak agent-agent repulsion $\mu = 2.10^{-5}$ while the right movie is obtained for $\mu = 10^{-3}$. In both situations, agents end up organizing in travelling bands on the long run, but we can observe a first phase when agents try to organize in trails. This suggest that the trail-like formation is only stable when agents have enough strength to push the obstacles as they move. Comparing the left and right movie, we also observe that larger agent-agent repulsion leads to larger clusters of agents.

\textit{S3 - Honneycomb structures}

Link: https://doi.org/10.6084/m9.figshare.19615116.v2

We consider here the case of weak obstacle spring stiffness $\kappa = 10$. The left movie is obtained for agent friction $\zeta = 0.2$ and mild agent-agent repulsion $\mu = 0.05$ while the right movie is obtained for $\zeta = 1$ and larger agent-agent repulsion $\mu = 0.1$. As one can observe, when the agent-agent repulsion is large enough in an easily deformable obstacle field, the agent phase wins over the obstacle phase, creating regularly spaced islands of obstacles in the form of honneycomb structures.  

\textit{S4 - Moving Clusters}

Link: https://doi.org/10.6084/m9.figshare.19615878.v2

Here, the left movie features weak obstacles $\kappa = 10$, agent friction $\zeta = 0.2$ and agent-agent repulsion $\mu = 0.002$, while the right movie is for $\kappa = 100$, $\zeta = 1$ and $\mu = 0.004$. When agent-agent repulsion is low enough, agents spontaneously organize into more or less round clusters surrounded by obstacles. This cluster formation happens very fast, and agent clusters then move more or less fast depending on their environment. Large and slow clusters of agents are observed in the left movie, where obstacles are very loose and agent-agent repulsion is a bit larger, while more numerous, smaller and more stable agent clusters are observed in the right movie (where agent-agent repulsion is a bit larger but obstacles are stronger).

\textit{S5 - Pinned Clusters}

Link: https://doi.org/10.6084/m9.figshare.19615866.v1

In these movies, we consider a small agent friction $\zeta = 0.2$ and agent-agent repulsion $\mu = 2. 10^{-4}$, for mild obstacle spring stiffness $\kappa = 100$ (left pannel) and for strong obstacles $\kappa = 1000$ (right pannel). We classify these clusters as 'pinned' as the agents organize very fast into small and highly concentrated clusters that do not move (right pannel) or move very slow in the case of mild obstacles. We can also observe merging of clusters in the case of mild obstacles (left video). In these extreme case, the force exerted by the stretched obstacles overcomes the other forces, preventing the agent clusters to move further from their position. 

\section{Supplementary material: Numerical codes for the Individual-Based Model and Continuous Model simulations \label{AppendixB}}

\textit{Individual-Based model code}

The MATLAB code corresponding to the simulations of Fig. 2 (Individual-Based model) can be found at the following link:
https://doi.org/10.6084/m9.figshare.19937861.v1

This supplementary material contains the following files: 
\begin{itemize}
	\item run\_SwOb\_IBM\_demo.m : file which allows to fix the parameter values (set to reproduce Fig 1 of the paper), create folders for registering the datafiles generated by the model, and run the simulations
	\item SwObIBM.m : MATLAB function that allows, given a set of parameters provided as entry, to run an entire simulation and register regularly the files in the specified data folder 
	\item The other files (get\_IC.m, get\_neighborhood\_info.m, get\_pushing\_forces.m, make\_plot.m) are intrinsic functions of the model, the description of which is contained in the corresponding files.
\end{itemize}

\textit{Continuous model code}

The MATLAB code corresponding to the simulations of Fig. 3 (Continuous model) can be found at the following link:
https://doi.org/10.6084/m9.figshare.19939409.v1

This supplementary material contains the following files: 
\begin{itemize}
	\item run\_Swimmer\_demo.m : file which allows to fix the parameter values (set to reproduce Fig 3 of the paper), create folders for registering the datafiles generated by the model, and run the simulations
	\item SwimmerSOH.m : MATLAB function that allows, given a set of parameters provided as entry, to run an entire simulation and register regularly the files in the specified data folder 
	\item The other files (get\_constants.m, get\_eigenvalues\_F.m, get\_eigenvalues\_G.m, get\_F.m, get\_G.m, get\_fluxes.m, get\_Jacobian\_F.m, get\_Jacobian\_G.m, mmat.m, convol.m, make\_plot.m) are intrinsic functions of the model, the description of which is contained in the corresponding files.
\end{itemize}

\section{Numerical method for the continuum  model} \label{num_method}
One of the difficulties in solving the nonlinear model \eqref{eqmacro} is the geometric constraint $|\Omega| = 1$, and the resulting non-conservativity of the model arising from
the presence of the projection operator $P_{\Omega^\perp}$. We rely on a method proposed in \cite{motsch2011numerical,degonddimarco2015} where the SOH model is approximated by a relaxation problem consisting of an unconstrained conservative hyperbolic system supplemented with a relaxation operator onto vector fields satisfying the constraint $|\Omega| = 1$. The relaxation model writes:
\begin{align}
	\partial_t \rho_g^\epsilon &+ \nabla \cdot \big(U^\epsilon \rho_g^\epsilon\big) = 0 \nonumber \\
	\partial_t ( \rho_g^\epsilon \Omega^\epsilon) &+ \nabla \cdot \bigg( \rho_g^\epsilon V^\epsilon \otimes \Omega^\epsilon \bigg) + d_3  \nabla \rho_g^\epsilon - \gamma_s  \Delta \big( \rho_g^\epsilon \Omega^\epsilon \big) =  \frac{\rho_g^\epsilon}{\epsilon} (1-|\Omega^\epsilon|^2)\Omega^\epsilon \label{relaxmac},
\end{align}
\noindent where
\begin{align*}
	U^\epsilon &= d_1 \Omega^\epsilon - \frac{1}{\xi } \nabla \bar{\rho^\epsilon}_f - \frac{\mu}{\xi} \nabla \rho^\epsilon_g, \\
	V^\epsilon  &= d_2 \Omega^\epsilon - \frac{1}{\xi } \nabla \bar{\rho}^\epsilon_f - \frac{\mu}{\xi} \nabla \rho^\epsilon_g.
\end{align*}

In the limit $\epsilon \rightarrow 0$, one can shows that \eqref{relaxmac} converges towards \eqref{eqmacro}. The main idea is based on the fact that the right-hand side of \eqref{relaxmac} is parallel to $\Omega^\epsilon$, and the proof is similar to \cite{motsch2011numerical}. 

The numerical method is adapted from \cite{motsch2011numerical}, using the so-called splitting scheme in two steps (dropping the $\epsilon$ for clarity):
\begin{itemize}
	\item Step 1: Solve the conservative part: 
	\begin{align}
		\partial_t \rho_g &+ \nabla \cdot \big(U \rho_g\big) = 0  \label{conserv}\\
		\partial_t ( \rho_g \Omega) &+ \nabla \cdot \bigg( \rho_g V \otimes \Omega \bigg) + d_3  \nabla \rho_g - \gamma_s  \Delta \big( \rho_g \Omega \big) = 0,
	\end{align}
	\item Solve the relaxation part:
	\begin{align*}
		\partial_t \rho_g &= 0 \nonumber \\
		\partial_t ( \rho_g \Omega) &= \frac{\rho_g}{\epsilon} (1-|\Omega|^2)\Omega
	\end{align*}
\end{itemize}

Writting $Q = \begin{pmatrix}  \rho_g \\ \rho_g \Omega_x \\ \rho_g \Omega_y \end{pmatrix}$,  and $\tilde{Q} = \rho_g \ast \Phi \ast \Phi$, system \eqref{conserv} can be written as:
\begin{equation}\label{conservativeform}
	\partial_t Q + \partial_x F(Q,Q_x,\Delta Q_x) + \partial_y G(Q,Q_y,\Delta Q_y) = 0,
\end{equation}
where we have denoted by $\Delta Q_x$ (resp. $\Delta Q_y$) the derivative in $x$ (resp. in $y$) of the Laplacian of $Q$, and the fluxes write
$$
F(Q,Q_x, \Delta \tilde{Q}_x) = \begin{pmatrix} d_1 Q(2) - \frac{1}{\xi} \frac{\gamma \rho_A}{\eta} Q(1) \Delta \tilde{Q}_x \ast \Phi \ast \Phi -  \frac{\mu}{\xi} Q(1) Q_x(1)\\
	d_2 \frac{Q(2)^2}{Q(1)} - \frac{1}{\xi} \frac{\gamma \rho_A}{\eta} Q(2) \Delta \tilde{Q}_x \ast \Phi \ast \Phi  - \frac{\mu}{\xi} Q(2) Q_x(1) + d_3 Q(1) - \gamma Q_x(2)\\
	d_2 \frac{Q(2)Q(3)}{Q(1)} - \frac{1}{\xi} \frac{\gamma \rho_A}{\eta} Q(3) \Delta \tilde{Q}_x \ast \Phi \ast \Phi  - \frac{\mu}{\xi} Q(3) Q_x(1)  - \gamma Q_x(3),
\end{pmatrix}
$$
and 
$$
G(Q,Q_y, \Delta \tilde{Q}_y) = \begin{pmatrix} d_1 Q(3) - \frac{1}{\xi} \frac{\gamma \rho_A}{\eta} Q(1) \Delta \tilde{Q}_y \ast \Phi \ast \Phi -  \frac{\mu}{\xi} Q(1) Q_y(1)\\
	d_2 \frac{Q(2)Q(3)}{Q(1)} - \frac{1}{\xi} \frac{\gamma \rho_A}{\eta} Q(2) \Delta \tilde{Q}_y \ast \Phi \ast \Phi  - \frac{\mu}{\xi} Q(2) Q_y(1) - \gamma Q_y(2) \\
	d_2 \frac{Q(3)^2}{Q(1)} - \frac{1}{\xi} \frac{\gamma \rho_A}{\eta} Q(3) \Delta \tilde{Q}_y \ast \Phi \ast \Phi  - \frac{\mu}{\xi} Q(3) Q_y(1) + d_3 Q(1) - \gamma Q_y(3).
\end{pmatrix}
$$

The explicit time-discretization for Eq. \eqref{eqmacro} writes:
$$
Q^*_{i,j} = Q_{i,j}^n  - \frac{\Delta t}{\Delta x} \bigg( F^n_{i+\frac{1}{2},j} - F^n_{i-\frac{1}{2},j}\bigg) - \frac{\Delta t}{\Delta y} \bigg( G^n_{i,j+\frac{1}{2}} - G^n_{i,j-\frac{1}{2}}\bigg),
$$
where
$$
F_{i+\frac{1}{2},j} = \frac{F(Q_{i,j}) + F(Q_{i+1,j})}{2} - \frac{1}{2} P^2(\frac{\partial F}{\partial Q}(\bar{Q}_{i,j}, \bar{Q_x}_{i,j}, \bar{\Delta \tilde{Q}_x}_{i,j}) \big( Q_{i+1,j} - Q_{i,j}\big),
$$
with
$$
\bar{Q}_{i,j} = \frac{Q_{i,j}+Q_{i+1,j}}{2}, \; {Q_x}_{i,j} = \frac{Q_{i+1,j} - Q_{i,j}}{2}, \; \bar{Q_x}_{i,j} = \frac{{Q_x}_{i,j}+{Q_x}_{i+1,j}}{2}, 
$$
$$
\Delta \tilde{Q}_{i,j} \approx \frac{\tilde{Q}_{i-1,j} + \tilde{Q}_{i+1,j} - 2 \tilde{Q}_{i,j}}{\Delta x^2} + \frac{\tilde{Q}_{i,j-1} + \tilde{Q}_{i,j+1} - 2 \tilde{Q}_{i,j}}{\Delta y^2}
$$
and where $P^2(\frac{\partial F}{\partial Q})$ is a second polynomial of a matrix at the intermediate state of  $(\bar{Q}_{i,j}, \bar{Q_x}_{i,j}, \bar{\Delta \tilde{Q}_x}_{i,j}) $ and $(\bar{Q}_{i+1,j}, \bar{Q_x}_{i+1,j}, \bar{\Delta Q_x}_{i+1,j}) $ computed following \cite{degond1999polynomial}. Terms in $G$ are computed the same way. 

Convoluted terms are computed using a double fast-Fourier transform as:
$$
\tilde{Q} = \rho \ast \Phi \ast \Phi = F^{-1} \bigg[ F\bigg( F^{-1} \big( \hat{ \rho} \hat{\Phi}\big)  \hat{\Phi}\bigg)  \bigg],
$$
\noindent where we used the fast Fourier transform.


\end{document}